\newtheorem{theorem}{Theorem}
\newtheorem{lemma}{Lemma}
\begin{document}
%
% paper title
% Titles are generally capitalized except for words such as a, an, and, as,
% at, but, by, for, in, nor, of, on, or, the, to and up, which are usually
% not capitalized unless they are the first or last word of the title.
% Linebreaks \\ can be used within to get better formatting as desired.
% Do not put math or special symbols in the title.
\title{Wideband Channel Estimation with A Generative Adversarial Network} 

\author{Eren Balevi and
        Jeffrey G. Andrews% <-this % stops a space
\thanks{The authors are with the Wireless Networking and Communications Group, Dept. of Electrical and Comp. Eng., The University of Texas at Austin, TX, 78712, USA. Email:
erenbalevi@utexas.edu and jandrews@ece.utexas.edu.} %Last revised: \today}
}

\maketitle
\vspace{-0.7in}

\begin{abstract}
Communication at high carrier frequencies such as millimeter wave (mmWave) and terahertz (THz) requires channel estimation for very large bandwidths at low SNR. Hence, allocating an orthogonal pilot tone for each coherence bandwidth leads to excessive number of pilots. We leverage generative adversarial networks (GANs) to accurately estimate frequency selective channels with few pilots at low SNR. The proposed estimator first learns to produce channel samples from the true but unknown channel distribution via training the generative network, and then uses this trained network as a prior to estimate the current channel by optimizing the network's input vector in light of the current received signal.  Our results show that at an SNR of $-5$ dB, even if a transceiver with one-bit phase shifters is employed, our design achieves the same channel estimation error as an LS estimator with SNR $= 20$ dB or the LMMSE estimator at $2.5$ dB, both with fully digital architectures. Additionally, the GAN-based estimator reduces the required number of pilots by about $70\%$ without significantly increasing the estimation error and required SNR. We also show that the generative network does not appear to require retraining even if the number of clusters and rays change considerably.
\end{abstract}

% Note that keywords are not normally used for peerreview papers.
\begin{IEEEkeywords}
Frequency selective channel estimation, GAN, MIMO, terahertz and millimeter wave communication.
\end{IEEEkeywords}

\section{Introduction}
Millimeter wave (mmWave) and terahertz (THz) communication offer large untapped bandwidths. Low signal-to-noise-ratio (SNR) ($\ll 0$ dB) channel estimation at these frequencies and bandwidths is desirable before beam alignment is completed, because exhaustively searching all narrow beams without estimating the channel brings exponentially increasing sample and computational complexity. However, there are nontrivial challenges regarding channel estimation  stemming from the propagation physics, the necessity of using a great many antenna elements for high gain beamforming, and the hybrid transceiver architectures.
%As opposed to mmWave, the characteristics of THz channels are less well-known stemming from the lack of measurement campaigns and mature hardware.  The THz channel is believed to be more rapidly varying due to being more dependent on the environment geometry, the dielectric properties of the region through which the signal propagates and weather conditions \cite{rappaport2019wireless}. For these reasons, it is questionable to use compressed sensing -- which requires to find a sparsifying basis and solve an optimization problem per coherence time -- in THz channels.

The contribution of this paper is to adapt powerful deep generative models for frequency selective mmWave and THz channel estimation as an alternative to known sparsity-based compressed sensing algorithms, which have been used for narrowband channels \cite{Alkhateeb2014Heath,Alkhateeb2015Heath,Ria2016Heath}, frequency-selective channels \cite{Schniter2014Sayeed,GaoHuDai2016Wang,Venugopal17Heath} and low-resolution channels \cite{Mo2018Heath}. Deep generative models offer an appealing approach to exploiting sparsity, as they can use knowledge of a finite number of signals from a class to learn a basis for the whole class. Furthermore, these models enable us to solve optimization problems with a simple and fast gradient descent based method. As an additional benefit, generative models can exploit the overall cross-correlations among the frequency, time and spatial domains, which have been traditionally ignored to simplify the estimator \cite{Li02}. Among the two prominent deep generative models, namely variational autoencoders (VAEs) \cite{kingma2013auto} and generative adversarial networks (GANs) \cite{goodfellow2014generative}, we utilize a GAN in this paper, since GANs can very effectively compress the signals to a low dimensional manifold by leveraging the channel structures. Exploiting this property enables us to reduce the number of required pilots for accurate channel estimation. 

\subsection{Motivation and Related Work}
High frequency bands incur high propagation losses for terrestrial communication, and hence a large number of small antenna elements is needed to attain a large beamforming gain. Conventional estimators require that the number of pilots has to be at least equal to the number of transmit antennas to avoid having an ill-posed problem. Thus, to reduce the number of pilots the existing high bandwidth channel estimators have been centered around compressed sensing tools motivated by the sparsity of mmWave channels \cite{Rap2013ItWillWork}. The same approach was also used for sub-6 GHz massive MIMO channels \cite{Gao2016Wang, Nguyen2013Ghrayeb, Gao2015Chen, Lin2017Jiang}. Unfortunately, it is very hard (or impossible) to find the basis that would yield the sparsest representation.  Also, the reconstruction phase is complex and slow for compressed sensing algorithms, which require the solution of an optimization problem to find the locations and values of the sparse coefficients.  This restricts their usage to channels with fairly long coherence intervals.    %Additionally, it is very hard (or impossible) to find the basis that would yield the sparsest representation. This is particularly true for THz channels, which are by far less understood than sub-6 GHz and mmWave channels \cite{rappaport2019wireless}.\cite{Schniter2014Sayeed,Alkhateeb2014Heath,Alkhateeb2015Heath,GaoHuDai2016Wang,Ria2016Heath,Venugopal17Heath, Mo2018Heath}

Deep learning has been recently utilized as an alternative for high-dimensional channel estimation. Specifically, \cite{Soltani18Sheikhzadeh} uses convolutional neural networks (CNNs) to make interpolation and denoising in 2-dimensional OFDM channels, \cite{He18Li} incorporates a special CNN as a denoiser to the approximate message passing (AMP) algorithm for beamspace channels, and \cite{Dong19Gaspar} adapts CNNs for 3-dimensional channels to exploit the correlations in frequency, time and space. To prevent the training complexity of CNNs for channel estimation, \cite{BalAndDCE,BalDosAndMassiveMIMO} proposed an untrained neural network that can precede or follow a least-squares (LS) estimator for OFDM and MIMO-OFDM channels, respectively. Combining an LS estimator with neural networks was also proposed in \cite{yang2019deep}, \cite{ru2019model}. There are some other studies that consider deep learning to tackle the detrimental effects of quantization for channel estimation \cite{GaoLi18}, \cite{BalAnd19}.

What distinguishes this paper from the prior techniques is that we design a GAN to learn to produce channel samples according to its distribution, and then use this knowledge as a priori information to estimate the actual current channel. This is a quite different approach from using GANs for channel modeling \cite{Shea2019West}, \cite{Ye2018Sivanesan}. Furthermore, as opposed to AMP-based channel estimators, our GAN approach does not require us to know or model the channel distribution. Instead, the GAN learns to produce samples that statistically are very close to the true but unknown channel distribution. The closest papers to our work are our recent papers \cite{Dos20BalAnd}, \cite{Bal2020DosJalDimAnd}, which use a similar GAN-based channel estimation architecture to reduce the number of pilots for single stream narrowband (frequency flat) massive MIMO channel estimation, with very tight ($\lambda/10$) antenna spacing contributing extra spatial correlation.  In contrast, in the current paper we introduce a frequency selective channel and utilize a more standard planar array with $\lambda/2$ spaced antennas.  Additionally, we consider one-bit quantized phase shifters to further decrease the power consumption and hardware costs for such a large antenna array.  We also study the generalization capability both through analysis and experiments. %Additionally, a novel training policy is utilized for our approach, in which we train the GAN parameters offline and then optimize the generative network input online, based on the current channel. This reduces the training complexity while still being adaptive to the changing channel. 

%Furthermore, our defined loss function for the estimator and baselines are different than \cite{Bal2020DosJalDimAnd}. More precisely, we utilize the LS and linear minimum mean square error (LMMSE) estimators as baselines. This selection enables us to see how the GAN-based estimator performs with respect to the near-optimum LMMSE\footnote{Near-optimum LMMSE refers to the case when the noise is Gaussian, but the channel covariance matrix is estimated, i.e., not known.}.

\subsection{Contributions}
The main contribution of this paper is to propose and study a novel GAN-based channel estimation algorithm for wideband frequency selective channels. Although in this paper the modulation and demodulation is based on OFDM, the proposed approach can be adapted to single-carrier frequency domain equalization (SC-FDE) systems as long as the channel is estimated in frequency domain. We will demonstrate that our GAN-based framework can estimate channels at very low SNR with a reduced number of pilots for hybrid beamforming architectures when the channel estimation is formulated as an inverse problem. In addition to the novel architecture, our contributions are both theoretical and empirical.

Theoretically, there are two main contributions.  First, the GAN framework requires sub-Gaussian measurements to meet theoretical guarantees \cite{bora2017compressed}.  In the channel estimation case, these measurements are determined by the pilots and the digital and analog precoders/combiners. We prove that when the pilots are chosen as zero mean bounded i.i.d. random variables, the sub-Gaussian requirement is indeed met for channel estimation even if there are constraints due to phase shifters and total transmission power.  Thus the corresponding guarantees hold.  Second, we investigate the generalization capability of the proposed estimator for channels with a different number of clusters and rays than the channel used for training the GAN.  Our technical approach is to apply theoretical principles from reinforcement learning.% for the GAN-based estimation.
% to show that the estimator generalizes to channels with different structure than the trained channel.  

Our empirical results demonstrate that the major challenges -- hybrid transceivers, low SNRs and insufficient pilots -- can be tackled with our proposed estimator. Specifically, we benchmark our technique versus the performance of conventional channel estimators for fully digital transceivers.   We find that our technique at an ultra-low SNR of $-5$ dB matches the performance of LS estimation at 20 dB and linear minimum mean square error (LMMSE) estimation at 2.5 dB. Furthermore, it is shown that GANs provide a lower channel estimation error than the traditional CNNs that are not trained with adversary loss, e.g., ResNet due to exploiting the high channel correlations much more efficiently. Additionally, our estimator allows a significant reduction in pilot tones (more than $50\%$) without any substantial performance loss, and yields lower estimation error than the Orthogonal Matching Pursuit (OMP) algorithm in this regime. %This benefit can be equivalently exploited by decreasing the number of transmit RF chains or sparsifying the connections between the RF chains and antenna elements while keeping the number of pilots constant. Additionally, we showed that for the same computational complexity GAN performs better channel estimates than the supervised learning models even if there are clean labeled data set. More importantly, 

%\textit{Notation:} In this paper, the matrices, vectors and scalars are denoted as bold uppercase $\mathbf{A}$, bold lowercase $\mathbf{a}$ and non-bold lower case $a$, respectively. $\mathbf{A}^T$ and $\mathbf{A}^H$ are the transpose and Hermitian transpose of the matrix $\mathbf{A}$. The $N\times N$ identity matrix is shown as $\mathbf{I}_N$. The Kronecker product of $\mathbf{A}$ with itself is represented as $\mathbf{A}\otimes \mathbf{A}$.

\section{System Model and Problem Statement}
We consider single user communication to estimate the frequency selective channel over a large number of antennas via pilot symbols. However, all the ideas proposed in this paper are equivalently applicable to multi-user communication if orthogonal pilots are allocated to each user and there is no inter-beam interference\footnote{Note that the proposed estimator is robust to inter-beam interference as long as the interference has a distribution whose tails are exponentially bounded as has been recently proven in \cite{Jalal2020compressed}. On the other hand, for heavy-tailed interference novel reconstruction methods are needed instead of minimizing the Euclidean distance.}. In the case of large antenna arrays, having a dedicated RF chain per antenna is too costly in terms of hardware and power consumption. Thus, the number of RF chains is reduced by processing the signals both in the digital and analog domain. This architecture is illustrated in Fig. \ref{fig:Hybrid_Beamforming}. Here, $N_s$ data streams are precoded digitally at each subcarrier. Then, the precoded signal is OFDM modulated for the $N_t^{\rm RF}$ RF chains, processed with an analog precoder (or phase shifters) and  transmitted over the $N_t$ transmit antennas. Similarly, the receiver has an analog combiner that converts the $N_r$ dimensional received signal into an $N_r^{\rm RF}\times 1$ vector. The resultant signal is then OFDM demodulated and combined with the digital combiner at each subcarrier. 
\begin{figure*} [!t]
\centering
\includegraphics[width=7in]{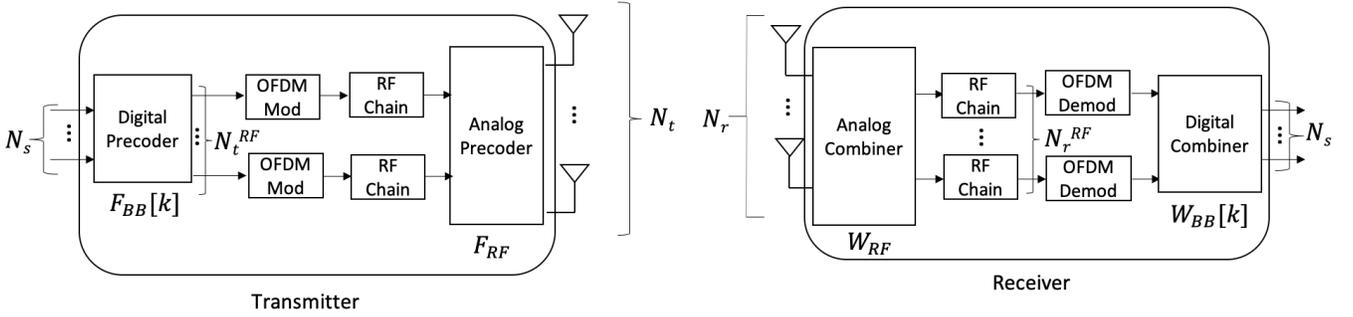}
\caption{The communication system model utilized for channel estimation, in which the pilots are passed through digital and analog precoders and combiners.}
\label{fig:Hybrid_Beamforming}
\end{figure*}

\subsection{Hybrid Transceivers}\label{System Model}
The pilot tone $\mathbf{p}[n,k]$ with the first and second axes being the time and frequency index is an $N_s\times 1$ vector such that $\mathbb{E}[\mathbf{p}[n,k]\mathbf{p}[n,k]^H] = \frac{1}{N_s}\mathbf{I}_{N_s}$. This signal is processed with the $N_t^{\rm RF} \times N_s$ dimensional digital precoder matrix $\mathbf{F}_{\rm BB}[n,k]$ as
\begin{equation}\label{dig_prec_sig}
    \mathbf{s}[n,k] = \mathbf{F}_{\rm BB}[n,k]\mathbf{p}[n,k]
\end{equation}
where $\mathbf{s}[n,k] = [s_1[n,k], s_2[n,k], \cdots, s_{N_t^{\rm RF}}[n,k]]^T$ for $k=0,1,\cdots,N_f-1$, i.e., there are $N_f$ subcarriers and $s_q[n,k]$ corresponds to the pilot for the $q^{th}$ RF chain on the $k^{th}$ subcarrier. In accordance with \eqref{dig_prec_sig}, the time domain samples become
\begin{equation}
    \mathbf{u}[n] = (\mathbf{F}^H\otimes\mathbf{I}_{\rm N_t^{\rm RF}})\underbrace{\begin{pmatrix}\mathbf{s}[n,0] \\ \vdots \\ \mathbf{s}[n,N_f-1]\end{pmatrix}}_{\mathbf{s}[n]}
\end{equation}
where $\mathbf{F}^H$ is the $N_f\times N_f$ IDFT matrix and $\otimes$ denotes the Kronecker product. Transmitting the pilots after the $ N_t \times N_t^{\rm RF}$ analog precoder $\mathbf{F}_{\rm RF}$\footnote{Since it is not practical to change the analog precoder and combiner at each symbol time, we omit the index $n$.} over a frequency-selective channel yields the received signal
\begin{equation}
    \mathbf{r}[n] = \sqrt{\rho}\mathbf{C}(\mathbf{I}_{\rm N_f}\otimes\mathbf{F}_{\rm RF})\mathbf{u}[n] + \mathbf{v}[n]
\end{equation}
where  $\rho$ is the average received power,
\begin{equation} \label{chan_mat}
    \mathbf{C} = 
    \begin{pmatrix}
    \mathbf{C}_0 & \mathbf{C}_{N_f-1} & \mathbf{C}_{N_f-2} & \cdots & \mathbf{C}_1 \\ 
    \mathbf{C}_1 & \mathbf{C}_0 & \mathbf{C}_{N_f-1} & \cdots & \mathbf{C}_2 \\ 
    \mathbf{C}_2 & \mathbf{C}_1 & \mathbf{C}_{0} & \cdots & \mathbf{C}_3 \\ 
    \vdots & \vdots & \vdots & \ddots & \vdots  \\
    \mathbf{C}_{N_f-1} & \mathbf{C}_{N_f-2} & \mathbf{C}_{N_f-3} & \cdots & \mathbf{C}_0 
    \end{pmatrix}
\end{equation}
where $\mathbf{C}_l$ is an $N_r\times N_t$ channel matrix. We assume that there are $L_c$ channel taps in the time domain. Hence, the matrix $\mathbf{C}_l$ is non-zero only for $l=0,1,\cdots, L_c-1$ and it is a zero matrix for $l=L_c, L_c+1,\cdots, N_f-1$. The noise vector $\mathbf{v}[n]$ is a zero mean i.i.d. Gaussian with covariance matrix $\sigma_n^2\mathbf{I}_{\rm N_fN_r}$. 

Combining the received signal with the phase shifters, leading to the $ N_r \times N_r^{\rm RF}$ matrix $\mathbf{W}_{\rm RF}$ followed by the DFT matrix $\mathbf{F}$ for OFDM demodulation gives
\begin{align}
    \mathbf{y}[n] = &\sqrt{\rho}(\mathbf{F}\otimes\mathbf{I}_{\rm N_r^{\rm RF}})(\mathbf{I}_{\rm N_f}\otimes\mathbf{W}_{\rm RF}^H)\mathbf{C}(\mathbf{I}_{\rm N_f}\otimes\mathbf{F}_{\rm RF}) 
    \nonumber \\ &
    (\mathbf{F}^H\otimes\mathbf{I}_{\rm N_t^{\rm RF}})\mathbf{s}[n] + \mathbf{w}[n]\label{analog_comb_sig}
\end{align}
where $\mathbf{w}[n]=(\mathbf{F}\otimes\mathbf{I}_{\rm N_r^{\rm RF}})(\mathbf{I}_{\rm N_f}\otimes\mathbf{W}_{\rm RF}^H)\mathbf{v}[n]$. Due to the mixed-product property of the Kronecker product, i.e., $(\mathbf{V}\otimes \mathbf{Y})(\mathbf{X}\otimes \mathbf{Z}) = (\mathbf{V}\mathbf{X})\otimes(\mathbf{Y}\mathbf{Z})$, it follows that
\begin{equation}\label{mix_prod_kron}
    (\mathbf{F}\otimes\mathbf{I}_{\rm N_r^{\rm RF}})(\mathbf{I}_{\rm N_f}\otimes\mathbf{W}_{\rm RF}^H) = (\mathbf{I}_{\rm N_f}\otimes\mathbf{W}_{\rm RF}^H)(\mathbf{F}\otimes\mathbf{I}_{\rm N_r}).
\end{equation}
Substituting \eqref{mix_prod_kron} into \eqref{analog_comb_sig} and doing the same for the transmitter lead to
\begin{align}
    \mathbf{y}[n] = &\sqrt{\rho}(\mathbf{I}_{\rm N_f}\otimes\mathbf{W}_{\rm RF}^H)(\mathbf{F}\otimes\mathbf{I}_{\rm N_r})\mathbf{C}(\mathbf{F}^H\otimes\mathbf{I}_{\rm N_t}) \nonumber \\ &
    (\mathbf{I}_{\rm N_f}\otimes\mathbf{F}_{\rm RF}) \mathbf{s}[n] + \mathbf{w}[n]. \label{rec_sig}
\end{align}
Since $\mathbf{C}$ is a block-diagonal circulant matrix as given in \eqref{chan_mat}, it is diagonalized with its left and right multiplying terms in \eqref{rec_sig} to
\begin{equation} 
    \mathbf{H} = (\mathbf{F}\otimes\mathbf{I}_{\rm N_r})\mathbf{C}(\mathbf{F}^H\otimes\mathbf{I}_{\rm N_t}) =
    \begin{pmatrix}
    \mathbf{H}_0 & 0 & \cdots & 0 \\ 
    0 & \mathbf{H}_1 & \cdots & 0 \\ 
    \vdots  & & \ddots & \vdots \\ 
     0 & \cdots & 0 & \mathbf{H}_{N_f-1} 
    \end{pmatrix}
\end{equation}
such that
\begin{equation} \label{chan_at_one_subcar}
    \mathbf{H}_k = \sum_{l=0}^{L_c-1}\mathbf{C}_le^{-j2\pi kl/N_f}
\end{equation}
for $k=0,1,\cdots,N_f-1$. Hence,
\begin{equation} \label{diag_rec_sig}
    \mathbf{y}[n] = \sqrt{\rho}(\mathbf{I}_{\rm N_f}\otimes\mathbf{W}_{\rm RF}^H)\mathbf{H}(\mathbf{I}_{\rm N_f}\otimes\mathbf{F}_{\rm RF}) \mathbf{s}[n] + \mathbf{w}[n]. 
\end{equation}
 
Vectorizing the channel matrix $\mathbf{H}$ in \eqref{diag_rec_sig} yields
\begin{equation} \label{vector_rec_sig_per_symb}
\mathbf{y}[n] = \sqrt{\rho} \underbrace{(\mathbf{s}[n]^T(\mathbf{I}_{\rm N_f}\otimes\mathbf{F}_{\rm RF}^T)\otimes(\mathbf{I}_{\rm N_f}\otimes\mathbf{W}_{\rm RF}^H))}_{\mathbf{A}[n]}\mathbf{\underbar{h}} + \mathbf{w}[n] 
\end{equation}
where $\mathbf{\underbar{h}} = {\rm vec}(\mathbf{H})$. We assume that there are $N_p$ symbols in one frame and the channel is constant throughout this frame. Thus,
\begin{equation}\label{chn_est_prob}
    \mathbf{y} = \begin{pmatrix}\mathbf{y}[n] \\ \vdots \\ \mathbf{y}[n+N_p-1]\end{pmatrix} =  \sqrt{\rho}\underbrace{(\mathbf{I}_{\rm N_p}\otimes\mathbf{A}[n])}_{\mathbf{A}}\underbrace{\begin{pmatrix}\mathbf{\underbar{h}} \\ \vdots \\ \mathbf{\underbar{h}}\end{pmatrix}}_{\mathbf{h}} + \mathbf{w}
\end{equation}
where $\mathbf{w}$ is obtained by concatenating the time domain samples as was done for $\mathbf{y}$. Here, $\mathbf{A}$ refers to the measurement matrix, and $\mathbf{h}$ (or $\mathbf{\underbar{h}}$) is the target signal that we aim to estimate. 

Optimizing the beamformer and combiner matrices in \eqref{chn_est_prob} enhances the received SNR, but unfortunately there is no way to optimally set these without knowing the channel. Hence, we consider an arbitrary scenario such that the digital beamformer and combiner are set to the identity matrix, i.e., $\mathbf{F}_{\rm BB}[n,k] = \mathbf{I}_{\rm N_t^{RF} \times N_s}$ and $\mathbf{W}_{\rm BB}[n,k] = \mathbf{I}_{\rm N_r^{RF} \times N_s}$. Furthermore, the phase shifters are adjusted with one-bit quantized angles to further reduce the power consumption of transceivers. This means that $[\mathbf{F}_{\rm RF}]_{i,j} = \frac{1}{\sqrt{N_t}}e^{j\theta_{i,j}}$, and $[\mathbf{W}_{\rm RF}]_{i,j} = \frac{1}{\sqrt{N_r}}e^{j\phi_{i,j}}$, in which $\theta_{i,j}, \phi_{i,j} \in A$, where $A = \{0, \pi \}$, and $[\mathbf{F}_{\rm RF}]_{i,j}$ and $[\mathbf{W}_{\rm RF}]_{i,j}$ are the $(i,j)$th element of $\textbf{F}_{\rm RF}$ and $\textbf{W}_{\rm RF}$, respectively.

\subsection{OFDM Channel Estimation with Multiple Antennas}
The optimum channel estimator for \eqref{chn_est_prob} is found via the \textit{maximum a posteriori} (MAP) optimization, which is equivalent to
\begin{eqnarray}\label{MAP}
\mathbf{\hat{h}_{\rm MAP}} & = & \underset{\mathbf{h}}{\text{arg\ max\ }}\mathcal{\log(P(\mathbf{y}|\mathbf{h}))}+\mathcal{\log(P(\mathbf{h}))}.
\end{eqnarray}
The main challenges for MAP estimation are the prohibitive computational complexity, since the coefficients of $\mathbf{h}$ are mixed in $\mathbf{y}$ and this makes the calculation of $\mathcal{P(\mathbf{y}|\mathbf{h})}$ quite complex, and the need for channel distribution. As a special case, when $\mathcal{P(\mathbf{y}|\mathbf{h})}$ is Gaussian, $\mathbf{\hat{h}_{\rm MAP}}$ becomes equivalent to the LMMSE estimator 
\begin{equation}\label{LMMSE}
    \mathbf{\hat{h}_{\rm LMMSE}} = \mathbf{R}_h (\mathbf{R}_h+\rho^{-1}\mathbf{\Gamma})^{-1}\mathbf{\hat{h}_{\rm LS}}
\end{equation}
where 
\begin{equation}\nonumber
    \mathbf{R}_h=\mathbb{E}[\mathbf{h}\mathbf{h}^H],
\end{equation}
\begin{equation}\nonumber
    \mathbf{\Gamma} = (\mathbf{A}^H\mathbf{A})^{-1}\mathbf{A}^H\mathbb{E}[\mathbf{w}\mathbf{w}^H]\mathbf{A}(\mathbf{A}^H\mathbf{A})^{-1},
\end{equation}
and
\begin{equation}\label{LS}
    \mathbf{\hat{h}_{\rm LS}} = \frac{1}{\sqrt{\rho}}(\mathbf{A}^H\mathbf{A})^{-1}\mathbf{A}^H\mathbf{y}.
\end{equation}
Note that $\mathbf{\Gamma}$ becomes an identity matrix in the case of digital transceivers. However, \eqref{LMMSE} is still computationally expensive due to the matrix inversions.  Also, $\mathbf{A}^H\mathbf{A}$ becomes non-invertible if there are not sufficient pilots. The AMP algorithm can near optimally solve \eqref{MAP} with reasonable complexity if $\mathcal{P}(\mathbf{h})$ is known \cite{Donoho2009Montanari}. However, it is unrealistic to assume a known $\mathcal{P}(\mathbf{h})$. Modeling $\mathbf{h}$ with Gaussian mixtures whose parameters are found with Expectation-Maximization algorithm can be a method if the elements of $\mathbf{h}$ are independent \cite{vila2013expectation}. However, the entries of $\mathbf{h}$ are correlated in wireless channels. Finding a sparsifying basis for the channel in \eqref{chn_est_prob} and using OMP and Basis Pursuit Denoising (or Lasso) for channel estimation lead to a high performance loss \cite{Bal2020DosJalDimAnd}.

For multiple antenna OFDM channel estimation, we use a fundamentally different approach. Our key idea is to design a GAN that learns to produce plausible channel samples instead of finding or modeling the highly complex channel distribution.  This is done offline, and then in the online phase we inject these channel samples into the estimator. This yields the following optimization problem
\begin{equation}\label{GAN_opt_prb}
    \mathbf{\hat{h}_{\rm GAN}} = \underset{\mathbf{h}}{\text{arg\ min\ }} ||\mathbf{y} - \sqrt{\rho}\mathbf{A}\mathbf{h}||_2^2 + r(\mathbf{h})
\end{equation}
where
\begin{align} \label{GAN_prior}
    r(\mathbf{h}) = 
    \begin{cases}
        0, & \text{if $\mathbf{h}$ is producible by the GAN}\\
       \infty, & \text{o.w.}
    \end{cases} 
\end{align}
Note that \eqref{GAN_prior} injects the a priori knowledge due to the trained GAN into the estimator, which means that among the many possible candidates, the estimate is the one that can be produced by the GAN\footnote{We note that producible means the channel is either on the range space of the generator or close to it, with the meaning of ``close" quantified in \eqref{reconst_error_high_quant}.}.

\subsection{Channel Estimation and Image Reconstruction Differences for GANs}
GANs have already been used to solve inverse problems in image processing  \cite{bora2017compressed}.  This raises the natural question of what is novel about using a GAN for channel estimation. The first answer is that the structure of the measurement matrix $\mathbf{A}$ in \eqref{chn_est_prob} is very different for the two applications. Furthermore, the signal structures of natural images and channels are distinct. To illustrate, in Fig. \ref{fig:comp_data} we visualize a sample image from the CelebA dataset by cropping $64\times16$ portion of it from the center and compare it with (i) a channel realization from a generic geometric channel model to show the spatial correlations and (ii) a channel realization from TDL-E channel model to show the frequency and time domain correlations. For ease of exposition, all signals are shaped to a $64\times 16$ matrix.  Another difference is that a good performance metric for channel estimation is the Euclidean distance or SNR, due to Gaussian noise and the dispassionate nature of symbol demodulation in the presence of such noise.  On the other hand, image quality is perceptual and qualitative, and Euclidean distance and SNR are known to be poor measures of image quality \cite{Doersch16}, \cite{Chen2020Bovik}.  Indeed, a major feature of a GAN is that it can produce an image that is far from the target image under a quantitative measure like Euclidean distance, but very close in a perceptual sense. 
\begin{figure*}[!t]
\centering
\subfigure[CelebA]{
\label{fig:celebA}
\includegraphics[width=3in]{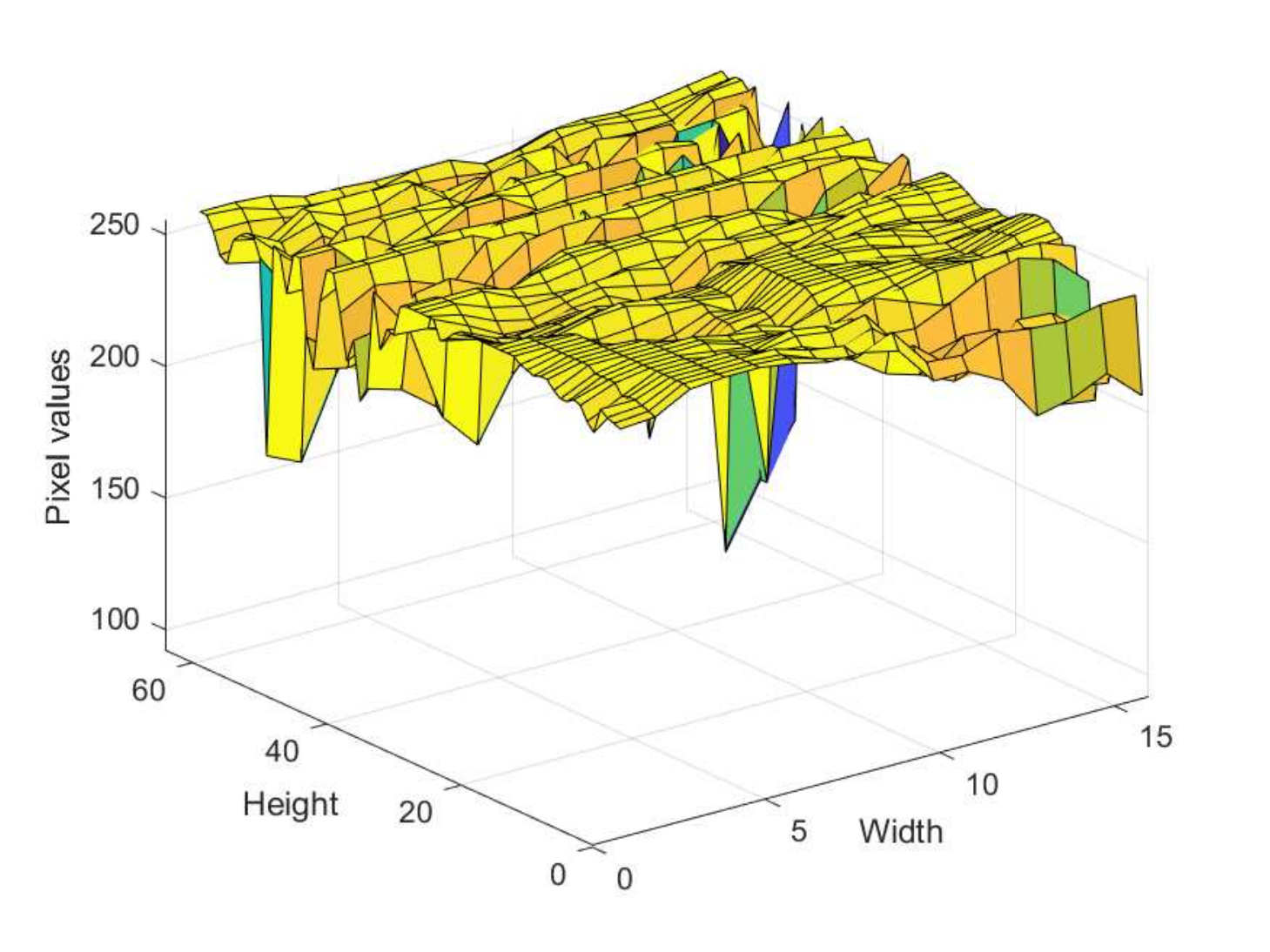}}
\qquad
\subfigure[Spatial correlations]{
\label{fig:geo_model}
\includegraphics[width=3in]{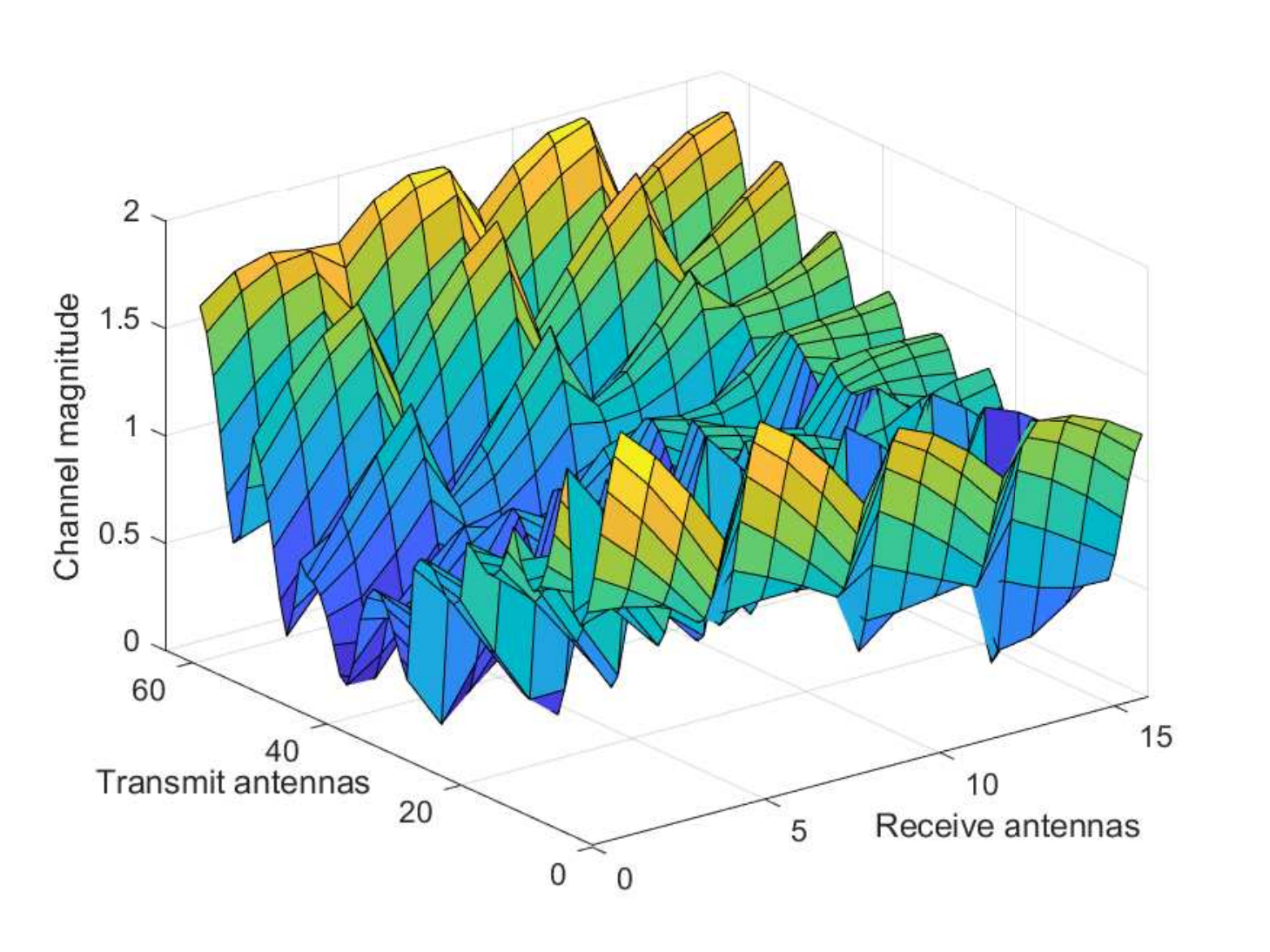}}
\subfigure[Frequency-time correlations]{
\label{fig:tdl_e_model}
\includegraphics[width=3in]{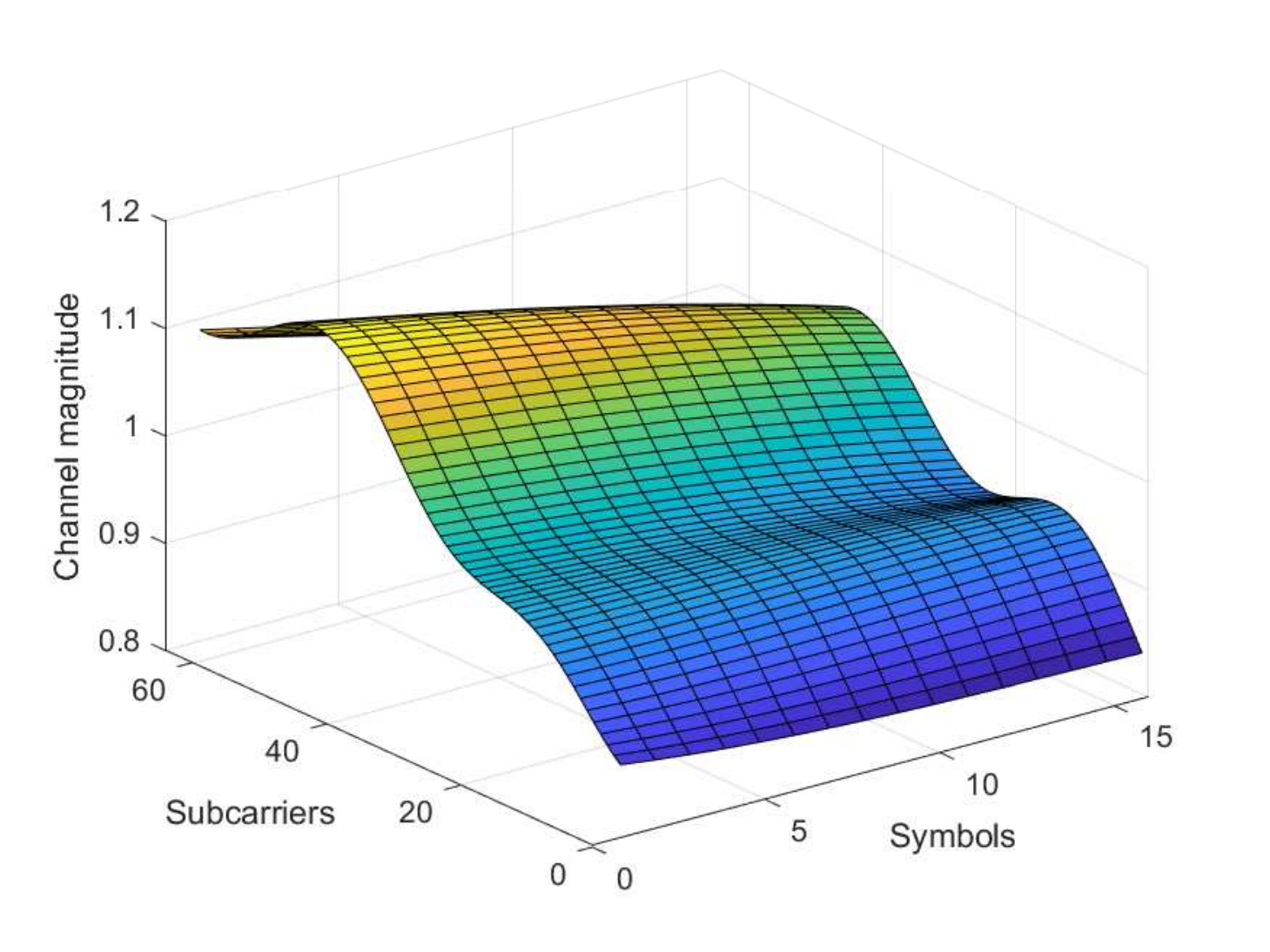}}
\caption{Visualisation of (a) a $64\times 16$ cropped image in CelebA dataset, (b) a channel realization for 64 transmit and 16 receive antennas from geometric channel model, and (c) a channel realization for 64 subcarriers and 16 OFDM symbols from TDL-E model}
\label{fig:comp_data}
\end{figure*}

%Another challenge of GAN for channel estimation lies in the highly changing nature of channels. To be more precise, for image processing there is a predetermined dataset, and the training and testing are done via that dataset. On the other hand, for channel estimation the trained GAN can face with a channel that is not placed in the dataset, i.e., a channel can come from a different distribution. This is highly likely, in particular for THz communication, since the statistics of a channel is subject to change. Hence, understanding the generalization capability of the GAN is crucial for channel estimation. 

\section{Channel estimation with Generative Adversarial Networks}
\label{sec:channel_estiamtion_with_training}
In this part, we present the GAN-based estimator for frequency-selective channels. Before going into the details of how the GAN works for channel estimation, the basics of GANs and our architecture are briefly summarized. Then, we explain how to solve the optimization problem in \eqref{GAN_opt_prb} at an algorithmic level.

\subsection{GAN Architecture}\label{GAN Architecture}
A GAN is composed of a generator $G_{\hat{\theta}_g}:\mathcal{R}^d\rightarrow\mathcal{R}^n$, in which $d \ll n$ and a discriminator $D_{\hat{\theta}_d}:\mathcal{R}^n\rightarrow\{-1,+1\}$, where $\hat{\theta}_g$ and $\hat{\theta}_d$ are the parameters of the generator and discriminator neural networks. As explained in \cite{goodfellow2014generative}, the discriminator is first trained both with the true samples in the dataset that are labeled as valid and the fake samples produced by the generator that are labeled as fake. In what follows, the generator is trained to enhance the quality of fake samples to fool the discriminator so that the fake samples are classified as valid. Model selection for the generator and discriminator networks is of key importance to facilitate the training. After extensive model exploration, \cite{Radford2016Chintala} developed a class of CNN architecture called \textit{Deep Convolutional GAN (DCGAN)}, and empirically showed that using a DCGAN for the generator and discriminator significantly alleviates the instability problems in training. %That's why, we train a DCGAN model and leverage its generator for channel estimation. 

In our DCGAN model, the first layer of the generator network processes the low dimensional vector with a fully connected layer and ReLU activation function, and then reshapes it into a 3-dimensional vector. This layer is then followed by four hidden layers, each of which is composed of upsampling, 2-dimensional convolution with $4\times4$ filters and $1\times1$ stride, batch normalization and ReLU activation function. It is worth noting that upsampling repeats the rows and columns so as to have the same dimensions with the channel matrix at the generator output. The output layer only involves a 2-dimensional convolution with a linear activation function. The discriminator network of our DCGAN has a 2-dimensional convolution with $3\times3$ filters and $2\times2$ stride for the input that has leaky ReLU activation function and dropout with $0.25$. This is followed by three hidden layers, each of which has 2-dimensional convolution with $3\times3$ filters and $2\times2$ stride, batch normalization and leaky ReLU activation function, and dropout with $0.25$. The slope of the leak is 0.2 for all leaky ReLU functions. The output has a fully connected layer, and its dimension is 1 to classify the channel realizations as either valid or fake. 

Consecutively repeating the process of training $D_{\hat{\theta}_d}(\cdot)$ $k$ times and training $G_{\hat{\theta}_g}(\cdot)$ one time to keep the discriminator near its optimal solution theoretically achieves the global optimum point of 
\begin{equation} \label{GAN_training_goodfellow}
	\underset{G}{\text{min\ }}\underset{D}{\text{max\ }}  \mathbf{E}_{\mathbf{h}} [\log D_{\hat{\theta}_d}(\mathbf{h})] +  \mathbf{E}_\mathbf{z}[\log (1-D_{\hat{\theta}_d}(G_{\hat{\theta}_g}(\mathbf{z})))],
\end{equation}
where $\mathbf{h}$ represents the training channel samples and $\mathbf{z}$ is the generator input sampled from a distribution $\mathcal{P(\mathbf{z})}$. The loss function in \eqref{GAN_training_goodfellow} -- which corresponds to minimizing Jensen–Shannon (JS) divergence between the generator distribution and the empirical distribution of the training channel samples -- has been used since the seminal paper \cite{goodfellow2014generative}. However, it is practically delicate to train a GAN with \eqref{GAN_training_goodfellow}, since the optimum value of $k$ is not known, and the theoretical guarantee is valid only for infinite capacity neural networks. To avoid these issues, based on the idea of replacing JS divergence with Wasserstein distance in training, \cite{arjovsky2017wasserstein} proposes to replace \eqref{GAN_training_goodfellow} with 
\begin{equation} \label{WGAN_training}
	\underset{G}{\text{min\ }}\underset{D}{\text{max\ }}  \mathbf{E}_{\mathbf{h}} [ D_{\hat{\theta}_d}(\mathbf{h})] -  \mathbf{E}_\mathbf{z}[D_{\hat{\theta}_d}(G_{\hat{\theta}_g}(\mathbf{z}))].
\end{equation}
The GANs whose generator and discriminator are trained according to \eqref{WGAN_training} is known as a \emph{Wasserstein GAN}. Note that for a Wasserstein GAN, the Lipschitz condition must be satisfied, and we ensured this with weight clipping.  We summarize the overall process of training a Wasserstein GAN according to \eqref{WGAN_training} in Algorithm \ref{algorithmGAN}, pointing the interested readers to \cite{arjovsky2017wasserstein} for more details.
\begin{algorithm}
 \caption{Wasserstein GAN for generating channels}\label{algorithmGAN}
 \begin{algorithmic}[1]
	\WHILE {$\hat{\theta}_g$ and $\hat{\theta}_d$ have not converged}
    \FOR {$t=1,2,\cdots,k$}
    \STATE Sample $\{\mathbf{h}_i\}_{i=1}^m$, a batch from the real channel realizations 
    \STATE Sample $\{\mathbf{z}_i\}_{i=1}^m \sim \mathcal{P(\mathbf{z})}$, a batch from the input prior to generate channels
    \STATE Calculate $\nabla{\hat{\theta}_d}\leftarrow \nabla_{\hat{\theta}_d} \frac{1}{m}\sum_{i=1}^m \bigg[D_{\hat{\theta}_d}(\mathbf{h_i}) -  D_{\hat{\theta}_d}(G_{\hat{\theta}_g}(\mathbf{z_i}))\bigg]$
    \STATE Update $\hat{\theta}_d$ with any gradient ascent-based method, e.g., $\hat{\theta}_d \leftarrow \hat{\theta}_d + \alpha\nabla{\hat{\theta}_d}$
    \STATE $\hat{\theta}_d \leftarrow \text{clip}({\hat{\theta}_d},-\epsilon_{\rm clip},\epsilon_{\rm clip})$
    \ENDFOR
    \STATE Sample $\{\mathbf{z}_i\}_{i=1}^m \sim  \mathcal{P(\mathbf{z})}$, a batch from the input prior to generate channels
    \STATE Calculate $\nabla{\hat{\theta}_g}\leftarrow {-\nabla_{\hat{\theta}_g}\frac{1}{m}\sum_{i=1}^mD_{\hat{\theta}_d}(G_{\hat{\theta}_g}(\mathbf{z}_i))}$
    \STATE Update $\hat{\theta}_g$ with any gradient descent-based method, e.g., $\hat{\theta}_g \leftarrow \hat{\theta}_g - \alpha\nabla{\hat{\theta}_g}$
    \ENDWHILE
 \end{algorithmic} 
 \end{algorithm}
%Optimizing \eqref{GAN_training_goodfellow} corresponds to minimizing Jensen–Shannon (JS) divergence between the generator distribution and the empirical distribution of the training channel samples \cite{Goodfellow-et-al-2016}.
%Despite the theoretical appeal of \eqref{GAN_training_goodfellow}, it is practically delicate to train a GAN, since the optimum value of $k$ is not known, and the theoretical guarantee is valid only for infinite capacity neural networks. It is also not clear what the optimum neural network architectures should be. To avoid these issues, it was shown that a Wasserstein GAN -- which is based on the idea of replacing JS divergence with Wasserstein distance in training -- eliminates the need to carefully handcrafting $k$ and neural network architectures \cite{arjovsky2017wasserstein}. The underlying optimization problem for Wasserstein GAN is
%\begin{equation} \label{WGAN_training}
%	\underset{G}{\text{min\ }}\underset{D}{\text{max\ }}  \mathbf{E}_{\mathbf{h}} [ D_{\hat{\theta}_d}(\mathbf{h})] -  \mathbf{E}_\mathbf{z}[D_{\hat{\theta}_d}(G_{\hat{\theta}_g}(\mathbf{z}))],
%\end{equation}
%which is different than \eqref{GAN_training_goodfellow}. 
\subsection{Frequency Selective Channel Estimator}
To solve the channel estimation problem in \eqref{GAN_opt_prb}, we first train a Wasserstein GAN offline, then extract its generator network $G_{\hat{\theta}_g}$ and iteratively optimize the input of this generative network online. This entire process is presented in Fig. \ref{fig:proposed_framework_chn_est} and detailed next.
\begin{figure}[!t]
\centering
\includegraphics[width=3.5in]{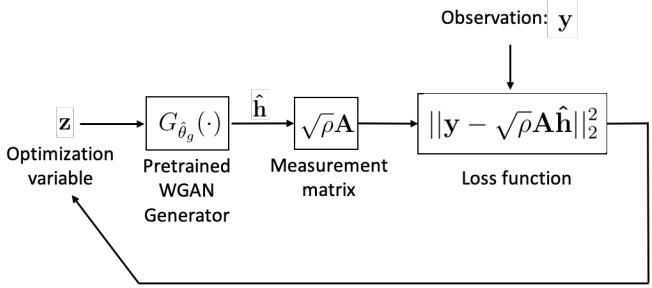}
\caption{Channel estimation by initializing the input $\mathbf{z}$ of the generative network, which is extracted from the pretrained GAN, with some random vector, e.g., with a realization of a Gaussian vector and optimizing $\mathbf{z}$ in light of the observation $\mathbf{y}$.}
\label{fig:proposed_framework_chn_est}
\end{figure}

\noindent
\textbf{1) Offline phase}: We envision that there will be datasets in future standards for different environments e.g., for urban macro (UMa), urban micro (UMi), indoor-open office just like the existing empirical channel models, and the neural network parameters are trained with these datasets. However, since there is no such dataset yet, in this paper the parameters of the GAN are trained offline by generating samples from a channel model as explained in Algorithm \ref{algorithmGAN}. After training, the generative part is taken and used as in Fig. \ref{fig:proposed_framework_chn_est}. We note that the parameters of the generative network are never retrained until there are significant changes in the channel statistics. The quantification of ``significant changes'' is a topic for future research, but we will provide an analysis and simulation results to better understand this.

\noindent \textbf{2) Online phase}: After offline training, the input of the generator network is periodically optimized according to the received signal in \eqref{chn_est_prob} to minimize
\begin{equation}\label{optimization_problem_GAN}
        \mathbf{z^*} = \underset{\mathbf{z}}{\text{arg\ min\ }} 
        ||\mathbf{y} - \sqrt{\rho}\mathbf{A}G_{\hat{\theta}_g}(\mathbf{z})||_2^2
\end{equation}
once per coherence time interval. The loss function in \eqref{optimization_problem_GAN} is different than \cite{Bal2020DosJalDimAnd}, which adds the $l_2$ norm square of $\mathbf{z}$ to \eqref{optimization_problem_GAN}, multiplied with an hyperparameter, as a regularization term. 
We note that $\mathbf{z}$ is initialized with Gaussian i.i.d. random variables and the output of the generative network for the optimized input corresponds to the channel estimate, i.e.,
\begin{equation} \label{GAN_chn_est}
        \mathbf{\hat{h}}_{\rm GAN} = G_{\hat{\theta}_g}(\mathbf{z^*}).
\end{equation}
There are many methods to solve \eqref{optimization_problem_GAN}, e.g., standard gradient descent or its variants. These steps are summarized in Algorithm \ref{algorithmCE}.
\begin{algorithm}
 \caption{GAN-based channel estimation}\label{algorithmCE}
 \begin{algorithmic}[1]
\renewcommand{\algorithmicrequire}{\textbf{Input: }}
\renewcommand{\algorithmicensure}{\textbf{Output: }}
\REQUIRE Gaussian i.i.d. noise $\mathbf{z}$
\ENSURE $\mathbf{\hat{h}}_{\rm GAN}$
\\ \textit{Offline Phase}: 
	\STATE Train the Wasserstein GAN as explained in Algorithm \ref{algorithmGAN} 
	\STATE Extract the trained generator $G_{\hat{\theta}_g}$ 
\\ \textit{Online Phase}: 
    \FOR {each coherence time interval}
    \STATE Given the noisy received signal $\mathbf{y}$ solve \eqref{optimization_problem_GAN} 
    \STATE Obtain the channel estimate as in \eqref{GAN_chn_est}
    \ENDFOR
 \end{algorithmic} 
 \end{algorithm}

\subsection{Measurement Matrix} \label{Measurement Matrix}
It was proven in \cite{bora2017compressed} that the framework in Fig. \ref{fig:proposed_framework_chn_est} reconstructs the signal with some bounded error when the measurement matrix has a sub-Gaussian distribution\footnote{A random variable $x$ has a sub-Gaussian distribution if its tail decays at least as fast as the tails of a Gaussian, or more formally $\mathcal{P}(|x|>t)\leq Ce^{-vt^2}$ for every $t > 0$, and positive constants $C$ and $v$.}. More specifically, when a ReLU generative network is utilized for $G_{\hat{\theta}_g}$, the channel estimation error is bounded by \cite{bora2017compressed} 
\begin{equation}\label{reconst_error_high_quant}
    ||\mathbf{\hat{h}}_{\rm GAN}-\mathbf{h}||_2 \leq 6\underset{ z^*\in \textbf{R}^k}{\text{min}}||G_{\hat{\theta}_g}(\textbf{z}^*)-\mathbf{h}||_2 + 3||\mathbf{w}||_2 + 2\epsilon
\end{equation}
with probability $1-e^{-\Omega(\Upsilon)}$ where $\Upsilon=N_fN_r^{\rm RF}N_p$. The first term in the right-hand side (RHS) of \eqref{reconst_error_high_quant} corresponds to the Euclidean distance between the current channel sample and the closest channel that can be produced by the trained generator, which is called representation error. The second term is the channel noise. The last term $\epsilon$ comes from gradient descent not necessarily converging to the global optimum and we denote it as optimization error. Hence, our GAN-based framework estimates the channel at worst with the given bound in \eqref{reconst_error_high_quant} when $\mathbf{A}$ is sub-Gaussian. Unfortunately, we cannot easily say that the measurement matrix $\mathbf{A}$ has a sub-Gaussian distribution, because $\mathbf{A}$ is composed of the pilots, and digital and analog precoders/combiners with some constraints. Specifically, the analog precoder and combiner matrices must satisfy the constant modulus constraint, i.e., the magnitude square of each of its element must have the same constant value. Furthermore, there is a total transmission power constraint. We now prove they do for a general class of pilots. 
\begin{theorem} \label{theorem_1}
If the pilot symbols are zero mean bounded i.i.d. random variables, then the measurement matrix $\mathbf{A}$ in \eqref{chn_est_prob} has sub-Gaussian entries for a given total transmission power.
\end{theorem}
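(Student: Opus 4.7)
The plan is to trace how the pilot randomness propagates through the Kronecker-structured expression for $\mathbf{A}$ and verify that, at every stage, the algebraic operations applied are bounded linear combinations of independent bounded zero-mean random variables, so that sub-Gaussianity of the pilots is inherited by the entries of $\mathbf{A}$.

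First I would fix an arbitrary entry of $\mathbf{A}$ and write it explicitly. Since $\mathbf{A} = \mathbf{I}_{N_p}\otimes\mathbf{A}[n]$, every nonzero entry of $\mathbf{A}$ is either zero or equal to some entry of $\mathbf{A}[n]$, so it suffices to analyze $\mathbf{A}[n] = \mathbf{s}[n]^T(\mathbf{I}_{N_f}\otimes\mathbf{F}_{\rm RF}^T)\otimes(\mathbf{I}_{N_f}\otimes\mathbf{W}_{\rm RF}^H)$. Because the row vector $\mathbf{s}[n]^T(\mathbf{I}_{N_f}\otimes\mathbf{F}_{\rm RF}^T)$ is $1\times N_fN_t$ and $(\mathbf{I}_{N_f}\otimes\mathbf{W}_{\rm RF}^H)$ is $N_fN_r^{\rm RF}\times N_fN_r$, each entry of their Kronecker product is precisely a single entry of the first row vector times a single entry of the second matrix. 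Moreover, since the block structure of $\mathbf{I}_{N_f}\otimes\mathbf{W}_{\rm RF}^H$ is block-diagonal with blocks equal to $\mathbf{W}_{\rm RF}^H$, that second entry is either zero or of the form $[\mathbf{W}_{\rm RF}^H]_{i,j} = \frac{1}{\sqrt{N_r}}e^{-j\phi_{j,i}}$. Similarly, each entry of the first row vector has the form $\sum_{q=1}^{N_t^{\rm RF}} s_q[n,k][\mathbf{F}_{\rm RF}^T]_{q,t}$ for some indices $k,t$, with $[\mathbf{F}_{\rm RF}]_{q,t} = \frac{1}{\sqrt{N_t}}e^{j\theta_{q,t}}$ having unit modulus (up to the normalization constant). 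Under $\mathbf{F}_{\rm BB}=\mathbf{I}$, we have $s_q[n,k] = p_q[n,k]$, which by hypothesis are zero-mean i.i.d.\ bounded, say $|p_q[n,k]|\leq M$ with $M$ finite by the total transmission power constraint $\mathbb{E}[\|\mathbf{p}\|^2]=1$ together with boundedness.

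Next, I would invoke two elementary sub-Gaussian facts: (i) any zero-mean random variable supported in $[-M,M]$ is sub-Gaussian with parameter $O(M^2)$ (Hoeffding's lemma); (ii) a sum of independent zero-mean sub-Gaussian random variables with parameters $\sigma_i^2$ is sub-Gaussian with parameter $\sum_i \sigma_i^2$, and scaling by a deterministic constant $c$ multiplies the parameter by $|c|^2$. Applying (i) to each pilot, then (ii) to the inner sum $\sum_q p_q[n,k][\mathbf{F}_{\rm RF}^T]_{q,t}$ (whose $N_t^{\rm RF}$ summands are independent with coefficients of modulus $1/\sqrt{N_t}$), yields a sub-Gaussian variable with parameter $O(N_t^{\rm RF}M^2/N_t)$. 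Finally, multiplying by the $\mathbf{W}_{\rm RF}^H$ factor of modulus $1/\sqrt{N_r}$ gives a sub-Gaussian parameter of $O(N_t^{\rm RF}M^2/(N_tN_r))$, which is finite and independent of the particular entry. Entries that equal zero are trivially sub-Gaussian, completing the argument.

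The main obstacle, such as it is, lies not in the concentration arithmetic but in carefully peeling apart the Kronecker and block-diagonal structure of $\mathbf{A}$ so that each entry is exhibited as a \emph{single} bounded-coefficient linear combination of independent pilots — once this bookkeeping is done, Hoeffding's lemma closes the argument in one line. A secondary point worth flagging is that the constant-modulus and one-bit phase-shifter constraints are precisely what keep the coefficients from $\mathbf{F}_{\rm RF}$ and $\mathbf{W}_{\rm RF}$ uniformly bounded, while the total transmission power constraint bounds $M$; without either of these the sub-Gaussian parameter could blow up. I would also remark that although this establishes sub-Gaussianity of each entry (which is what the theorem claims and what feeds into the bound in \eqref{reconst_error_high_quant}), the entries of $\mathbf{A}$ are obviously not mutually independent due to pilot reuse across the Kronecker structure; this is consistent with the guarantees of \cite{bora2017compressed}, which require sub-Gaussian measurements but do not demand i.i.d.\ entries in the form needed here.
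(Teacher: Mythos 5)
Your proof is correct and follows essentially the same route as the paper's: both unravel the Kronecker structure so that each entry of $\mathbf{A}$ is exhibited as a linear combination of the i.i.d.\ zero-mean bounded pilots with deterministic coefficients that stay bounded thanks to the constant-modulus and total-power constraints, and then conclude sub-Gaussianity via Hoeffding. The only minor difference is that you specialize to $\mathbf{F}_{\rm BB}=\mathbf{I}$, whereas the paper's appendix keeps the digital precoder general and controls its contribution through the submultiplicativity of the Frobenius norm under the power constraint $\|\mathbf{F}_{\rm RF}\mathbf{F}_{\rm BB}[n]\|_F^2=N_s$.
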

\begin{proof}
See Appendix \ref{Proof of Theorem 1}.
\end{proof}
Thus, our channel estimator algorithm for hybrid transceivers provides a theoretical guarantee that the estimation error cannot be worse than the sum of the three error terms in \eqref{reconst_error_high_quant}.  The first term in \eqref{reconst_error_high_quant} does not diminish with high SNR although the noise power or $||\mathbf{w}||_2^2$ goes to $0$ when SNR goes to infinity. This leads to residual error even if $\epsilon$ is negligible, which becomes more prominent for high SNRs. On the other hand, if there is a perfectly trained generative network, which can produce all the samples in a channel distribution, then this residual error becomes 0.

\section{Generalization Capability}\label{Generalization Capability}
The training channel samples used for the GAN and the test channels to measure the performance of the proposed estimator must ideally have the same structure.  However, the structures in wireless channels can constantly change depending on various factors. This implies that the test channels can have different structures, and the ability to gracefully handle novel data or the generalization capability of our estimator has to be investigated. This, however, is an involved subject. In particular, determining the channel conditions that require retraining is a complicated problem, since this also depends on the GAN architecture, i.e., some architectures can be more robust to the distributional changes\footnote{To avoid online retraining, receivers can also store different set of GAN parameters after identifying the scenarios that require retraining, and pick the most appropriate set of parameters.}. Thus, we start by studying the relatively simple case, namely the number of clusters and rays, which are a linear function of the channel. 

The proposed method in Fig. \ref{fig:proposed_framework_chn_est} searches the best estimate in the channel manifold acquired with the trained generative network of the GAN, i.e.,
\begin{equation}\label{GD}
\mathbf{z}_{n+1} = \mathbf{z}_{n} - \mu_n\nabla_{\mathbf{z}_{n}}||\mathbf{y} - \sqrt{\rho}\mathbf{A}G_{\hat{\theta}_g}(\mathbf{z}_{n})||_2^2,
\end{equation}
where $\mu_n$ is the step size at the $n^{th}$ iteration. For the ease of analysis, we assume that the parameters $\hat{\theta}_g$ are trained offline with the samples that come from 
\begin{equation}\label{geo_chn_mod}
    \mathbf{H}_k = \sqrt{\frac{N_rN_t}{\gamma}}\sum_{i=1}^{N_{\rm cl}^{\rm (off)}}\sum_{l=1}^{N_{\rm ray}^{\rm (off)}}\alpha_{i,l}\mathbf{a}_r(\phi_{i,l}^r,\theta_{i,l}^r)\mathbf{a}_t(\phi_{i,l}^t,\theta_{i,l}^t)^H
\end{equation}
for $k=0,1,\cdots,N_f-1$ where $N_{\rm cl}^{\rm (off)}$ is the number of clusters, each of which has $N_{\rm ray}^{\rm (off)}$ paths. Here, $\alpha_{i,l}$ is the complex path gain of the $l^{th}$ ray in the $i^{th}$ cluster, and $\phi_{i,l}^r, \theta_{i,l}^r, \phi_{i,l}^t, \theta_{i,l}^t$ are the azimuth and elevation angles of arrival and departure respectively. For notation simplicity, we assume that the impact of pulse shape $\sum_{b=0}^{L_c-1}p(bT_s-\tau_{i,l})e^{-\frac{j2\pi kb}{N_f}}$, where $T_s$ is the symbol period and $\tau_{i,l}$ is the corresponding delay, is included within $\alpha_{i,l}$. The vectors $\mathbf{a}_r(\phi_{i,l}^r,\theta_{i,l}^r)$ and $\mathbf{a}_t(\phi_{i,l}^t,\theta_{i,l}^t)$ are the normalized receive and transmit antenna array response, which cover the relative angle of arrival and departure shift of each ray. Also, $\gamma$ is the normalization constant to ensure that $\mathbb{E}[||\mathbf{H}_k||_F^2] = N_rN_t$. 

To understand the generalization capability, we need to perceive the impact of the channel statistics on \eqref{GD}. In this regards, we assume that the online channel has a different number of clusters and rays than the offline samples, for which the GAN was trained. More precisely, we assume that the GAN is offline trained with $N_{\rm cl}^{\rm (off)}$ and $N_{\rm ray}^{\rm (off)}$ number of clusters and rays, and this yields the parameters $\hat{\theta}_g$ that define the distribution $G_{\hat{\theta}_g}$. We further assume that the GAN parameters would become $\theta_g$ leading to the distribution $G_{\theta_g}$ if it was trained with $N_{\rm cl}^{\rm (on)}$ and $N_{\rm ray}^{\rm (on)}$, the parameters of \eqref{geo_chn_mod} for the online channel. We next show that these two distributions belong to the same distribution family. 
\begin{lemma}\label{lemma1}
$G_{\hat{\theta}_g}$ and $G_{\theta_g}$ each have a sub-Gaussian distribution.
\end{lemma}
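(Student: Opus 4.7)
The plan is to exhibit each distribution as a bounded Lipschitz transformation of a sub-Gaussian input, from which the sub-Gaussian property follows by standard concentration arguments, and to verify that this conclusion is insensitive to the specific values of $N_{\rm cl}$ and $N_{\rm ray}$ used in training.

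First I would confirm that the underlying target channel law in \eqref{geo_chn_mod} is itself sub-Gaussian. The steering vectors $\mathbf{a}_r(\cdot)$ and $\mathbf{a}_t(\cdot)$ are normalized and built from unit-modulus complex exponentials, so every entry of each rank-one outer product $\mathbf{a}_r \mathbf{a}_t^H$ is bounded in modulus by a deterministic constant. The complex path gains $\alpha_{i,l}$ are zero-mean complex Gaussians and hence sub-Gaussian, so each entry of $\mathbf{H}_k$ is a finite linear combination of $N_{\rm cl} N_{\rm ray}$ sub-Gaussian random variables with bounded coefficients. By the standard closure of the sub-Gaussian class under such combinations, the entries of $\mathbf{H}_k$ and therefore of the vectorized channel are sub-Gaussian, and this conclusion is oblivious to whether $(N_{\rm cl}, N_{\rm ray}) = (N_{\rm cl}^{\rm (off)}, N_{\rm ray}^{\rm (off)})$ or $(N_{\rm cl}^{\rm (on)}, N_{\rm ray}^{\rm (on)})$. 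Hence both target laws sit inside the sub-Gaussian family.

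Next I would transfer this property through the generator. The cleanest route exploits the Wasserstein-GAN architecture of Section \ref{GAN Architecture}: the input $\mathbf{z}$ in Algorithm \ref{algorithmCE} is drawn from a Gaussian prior, and $G_{\hat{\theta}_g}$ is a composition of affine layers, ReLU / leaky-ReLU activations (which are $1$-Lipschitz), batch normalization, and convolutions whose weights are explicitly clipped to $[-\epsilon_{\rm clip},\epsilon_{\rm clip}]$ in Algorithm \ref{algorithmGAN} to enforce the Lipschitz constraint required by \eqref{WGAN_training}. Consequently $G_{\hat{\theta}_g}$ is a Lipschitz map of $\mathbf{z}$ with a constant $L$ bounded by a product of the clipping level and the layer widths/filter sizes, and the Gaussian concentration inequality for Lipschitz functions then produces sub-Gaussian tails for each coordinate of $G_{\hat{\theta}_g}(\mathbf{z})$ with proxy of order $L^2$. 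The identical argument applies verbatim to $G_{\theta_g}$, with a possibly different Lipschitz constant dictated by the retrained weights.

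The main obstacle I anticipate is reconciling the two viewpoints quantitatively. The ``inheritance'' argument would additionally require that $W_1$-closeness to a sub-Gaussian law implies sub-Gaussian tails for the approximating law, which is not automatic, whereas the Lipschitz-of-Gaussian route is immediate but gives a proxy tied to network architecture rather than to channel physics. Since the lemma only asserts class membership, I would build the formal proof around the Lipschitz route and mention the physics-based inheritance only as a remark. A quantitative comparison of the two proxies, and in particular how the proxy of $G_{\theta_g}$ drifts from that of $G_{\hat{\theta}_g}$ as $(N_{\rm cl}, N_{\rm ray})$ varies, is the natural bridge to the generalization bound that should follow this lemma.
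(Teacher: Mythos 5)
Your proposal is correct, but it commits to a genuinely different route than the paper. The paper's proof is the ``inheritance'' argument you describe and then set aside: it assumes the path gains $\alpha_{i,l}$ are independent with bounded support $[-B_{i,l},B_{i,l}]$ and, crucially, that the GAN is \emph{perfectly} trained so that $G_{\hat{\theta}_g}$ has exactly the distribution of the channel model \eqref{geo_chn_mod}; Hoeffding's inequality \eqref{orig_hoeffding_inequality} applied to the finite sum $\sum_{i,l}\alpha_{i,l}\mathbf{a}_{i,l}$ then yields the tail bounds \eqref{est_param_dist}--\eqref{param_dist} with variance proxy $2\sum_{i,l}\|\mathbf{a}_{i,l}\|_F^2$. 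By assuming exact distributional equality rather than mere $W_1$-closeness, the paper sidesteps precisely the obstacle you correctly identify (that Wasserstein proximity to a sub-Gaussian law does not by itself control tails). Your Lipschitz-of-Gaussian route is, in that sense, more honest: it establishes sub-Gaussianity of $G_{\hat{\theta}_g}(\mathbf{z})$ unconditionally from the weight clipping in Algorithm \ref{algorithmGAN} and the Gaussian prior on $\mathbf{z}$, with no idealized-training assumption. What you lose, however, is the specific \emph{form} of the proxy: the paper's proxy is expressed through $N_{\rm cl}$, $N_{\rm ray}$ and $\gamma$ (via $\mathbf{a}_{i,l}$), and the subsequent generalization argument leading to \eqref{cond} relies on letting $\gamma\to\infty$ so that both tail bounds collapse to a Dirac delta. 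An architecture-dependent proxy of order $L^2$ carries no $\gamma$ or cluster/ray dependence, so the downstream limit would not go through without the quantitative bridge you defer to a remark. Two minor points: you take the $\alpha_{i,l}$ to be complex Gaussian whereas the paper assumes bounded support (either suffices for sub-Gaussianity, but only the bounded assumption matches the Hoeffding statement the paper invokes), and for the lemma as literally stated -- bare class membership -- your argument is complete and arguably stronger than the paper's.
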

\begin{proof}
Assume that $\alpha_{i,l}$'s in \eqref{geo_chn_mod} are independent random variables, whose support set is between $[-B_{i,l},B_{i,l}]$ and the GAN is perfectly trained so that it learns the channel distribution. Then, using the Hoeffding's inequality \cite{UHD_book} given in \eqref{orig_hoeffding_inequality} for \eqref{geo_chn_mod}, the tail of $G_{\hat{\theta}_g}$ is specified as
\begin{equation}\label{est_param_dist}
    \mathcal{P}(G_{\hat{\theta}_g}\geq t) \leq \exp\left(-\frac{t^2}{2 \sum_{i=1}^{N_{\rm cl}^{\rm (off)}}\sum_{l=1}^{N_{\rm ray}^{\rm (off)}}||\mathbf{a}_{i,l}||_F^2}\right)
\end{equation}
where $\mathbf{a}_{i,l}=\sqrt{\frac{N_rN_t}{\gamma}}\mathbf{a}_r(\phi_{i,l}^r,\theta_{i,l}^r)\mathbf{a}_t(\phi_{i,l}^t,\theta_{i,l}^t)^H$. Similarly, the tail of $G_{\theta_g}$ becomes
\begin{equation}\label{param_dist}
    \mathcal{P}(G_{\theta_g}\geq t) \leq \exp\left(-\frac{t^2}{2 \sum_{i=1}^{N_{\rm cl}^{\rm (on)}}\sum_{l=1}^{N_{\rm ray}^{\rm (on)}}||\mathbf{a}_{i,l}||_F^2 }\right).
\end{equation}
From \eqref{est_param_dist} and \eqref{param_dist}, $G_{\hat{\theta}_g}$ and $G_{\theta_g}$ are sub-Gaussian.
\end{proof}

Lemma \ref{lemma1} is used to show that the gradient of the generator input parameters with respect to the measurement error
\begin{equation}\label{meas_err}
    J_n(\theta_g, \hat{\theta}_g) = ||\mathbf{y} - \sqrt{\rho}\mathbf{A}G_{\hat{\theta}_g}(\mathbf{z}_{n})||_2^2.
\end{equation}
is unbiased even if the statistics of the test channel samples differ from the training samples in terms of the number of clusters and rays. To make this point clear, let $\mathbf{h}$ be in the range space of the hypothetically online trained GAN for some input vector $\mathbf{z}$. This means that the received signal in \eqref{chn_est_prob} can be written as
\begin{equation} 
\mathbf{y} = \sqrt{\rho}\mathbf{A}G_{\theta_g}(\mathbf{z}) + \mathbf{w}.
\end{equation}
Thus, differentiating \eqref{meas_err} with respect to the generative input yields
\begin{eqnarray} \label{part_der_cost}
    \nabla_{\mathbf{z}_{n}} J_n(\theta_g, \hat{\theta}_g) &=& \nabla_{\mathbf{z}_{n}}\mathbf{y}^H\mathbf{y} - \nabla_{\mathbf{z}_{n}}2\sqrt{\rho}\mathbf{y}^H\mathbf{A}G_{\hat{\theta}_g}(\mathbf{z}_{n}) + \nonumber \\
    && \nabla_{\mathbf{z}_{n}} \rho G_{\hat{\theta}_g}(\mathbf{z}_{n})^H\mathbf{A}^H\mathbf{A}G_{\hat{\theta}_g}(\mathbf{z}_{n}).
\end{eqnarray}
Since both $G_{\hat{\theta}_g}$ and $G_{\theta_g}$ are sub-Gaussian as shown in Lemma \ref{lemma1}, the RHS of \eqref{est_param_dist} and \eqref{param_dist} converge to a Dirac delta function $\delta(t)$ when $\gamma \rightarrow \infty$, and hence for $\gamma \rightarrow \infty$ we can observe that
\begin{equation}\label{cond}
    \mathbb{E}\bigg[ \nabla_{\mathbf{z}_{n}} J_n(\theta_g, \hat{\theta}_g) \bigg] = \nabla_{\mathbf{z}_{n}} ||\mathbf{y} - \sqrt{\rho}\mathbf{A}G_{\theta_g}(\mathbf{z}_{n})||_2^2.
\end{equation}
Although these are limiting results for very high SNR ($\gamma \rightarrow \infty$) and i.i.d. channel taps, our empirical results in Section \ref{Generalization Capability Num} indicate a similar behavior for practical SNRs and channel models.

%\begin{proposition} \label{unbiased_grad}
%For $\gamma \rightarrow \infty$,
%\begin{equation}\label{cond}
%    \mathbb{E}\bigg[ \nabla_{\mathbf{z}_{n}} J_n(\theta_g, \hat{\theta}_g) \bigg] = \nabla_{\mathbf{z}_{n}} ||\mathbf{y} - \sqrt{\rho}\mathbf{A}G_{\theta_g}(\mathbf{z}_{n})||_2^2.
%\end{equation}
%\end{proposition}

%\begin{proof}
%Since both $G_{\hat{\theta}_g}$ and $G_{\theta_g}$ are sub-Gaussian as shown in Lemma \ref{lemma1}, the RHS of \eqref{est_param_dist} and \eqref{param_dist} go to Dirac delta function $\delta(t)$ when $\gamma \rightarrow \infty$, and hence it is immediate to see that the expected value of \eqref{part_der_cost} becomes equal to the gradient of the measurement error that has a hypothetically online trained $G$. 
%\end{proof}

The importance of \eqref{cond} is related to our empirical observations in Sect. V.C that having mismatched numbers of clusters and rays between online and offline phases does not appear to degrade the performance. This can be explained with the policy gradient concept in reinforcement learning, since we can model our sequential decision making in the $\mathbf{z}$-space in \eqref{GD} as a a stateless (one-step) reinforcement task by taking an action $G_{\hat{\theta}_g}(\mathbf{z}_{k})$ at each step and receiving the related penalty in \eqref{meas_err}. Specifically, if we consider the trained generator $G$ as our policy, and the generator input $\mathbf{z}$ as the policy parameters, then the problem in \eqref{GD} becomes equivalent to updating the policy parameters according to a scalar performance metric. From policy gradient methods, it is known that using an estimate of the true gradient performs as well as the ground truth gradient if the expected value of the gradient estimate approximates the true gradient \cite{SuttonBarto}: which is ensured for our case via \eqref{cond}. Furthermore, with more exploration we can even enhance the channel estimates, because this enables us to converge to a better local minima \cite{SuttonBarto}.  More precisely, since the GAN generator supports vector arithmetic with some error $\mathbf{e}$ \cite{Radford2016Chintala}, defining $\mathbf{z}=\mathbf{z}_{n}+\mathbf{\Delta z}_{n}$ yields
\begin{equation} \label{chn_param_theta}
\mathbf{y} = \sqrt{\rho}\mathbf{A}(G_{\theta_g}(\mathbf{z}_{n}) + G_{\theta_g}(\mathbf{\Delta z}_{n})+\mathbf{e}) + \mathbf{w}.
\end{equation}
An analogy for \eqref{chn_param_theta} is that if $G_{\theta_g}(\mathbf{z}_{n})$ is a man's face without glasses and $G_{\theta_g}(\mathbf{\Delta z}_{n})$ is glasses, then $G_{\theta_g}(\mathbf{z}_{n}+\mathbf{\Delta z}_{n})$ is a clean image of a man with glasses, whereas $G_{\theta_g}(\mathbf{z}_{n}) + G_{\theta_g}(\mathbf{\Delta z}_{n})$ is a noisy image of a man with glasses. Since the latter is also a meaningful image, there is an empirical evidence that the error $\mathbf{e}$ is bounded; otherwise it would become a nonsense image \cite{Radford2016Chintala}. Depending on this we further assume that $\mathbf{z}_{n}$ and $\mathbf{\Delta z}_{n}$ are independent. Therefore, if $\theta_g = \hat{\theta}_g$, then \eqref{part_der_cost} becomes 0 in the first local minima, since $\nabla_{\mathbf{z}_{k}} G_{\hat{\theta}_g}(\mathbf{z}_{k}) = 0$ assuming that $\nabla_{\mathbf{z}_{k}}\mathbf{e}$ is negligible. This means that the GAN input in \eqref{GD} converges to the first local minima. %Since the action selection is deterministic, i.e, the GAN output or $\mathbf{\hat{h}}_{\rm GAN}$ is picked with probability 1, this means that there is no exploration. 
On the other hand, if $\theta_g \neq \hat{\theta}_g$, \eqref{part_der_cost} does not become 0 even if $\nabla_{\mathbf{z}_{k}} G_{\hat{\theta}_g}(\mathbf{z}_{k}) = 0$. This enables us to explore the landscape of the generator more.

\section{Numerical Results}\label{Numerical Results}
The proposed frequency selective estimator is assessed for downlink channel estimation, in which a base station that has uniform rectangular array (URA) for $64$ antennas transmits to a receiver with $16$ URA antennas over $64$ subcarriers. The spacing among antenna elements is taken $\lambda/2$ both at the transmitter and receiver unless otherwise stated. In the simulations, a single RF chain is utilized at the receiver as an example. This heavily reduces the power consumption at the expense of decreasing the number of measurements, i.e., the number of rows in $\mathbf{A}$. To be more precise, the power consumption per RF chain at mmWave frequencies is about $300$mW, which is expected to be even higher for THz communication \cite{Rangan2013Liu}. Hence, the power consumption for the fully digital receiver becomes more than $4$W, whereas it is only around $300$mW for our receiver. Since the transmitter RF chains do not affect the number of measurements, it is taken as $N_s=N_t^{RF}$, but surprisingly further reduction is possible for the proposed estimator as will be discussed. Phase shifters both at the transmitter and receiver have one-bit quantized angles as explained in Section \ref{System Model}. Before giving the performance results, we first explain the GAN training process.

\subsection{GAN Training}

To train the GAN, we generated $5000$ complex channel realizations with dimension $(64\times16\times64\times16)$ via MATLAB using the ''5D toolbox'', which corresponds to (transmit antenna $\times$ receive antenna $\times$  subcarriers $\times$ OFDM symbols) from the TDL-E channel model that supports up to 0.1 THz \cite{LTEChannelModel}. Although it is not obligatory to train the 4-dimensional array with a single GAN, e.g., there can be $N_t$ parallel GANs, each of which can be trained with the same $(16\times64\times16)$ samples, alignment of the channels in frequency, time and spatial axes for the height, width and depth of the CNNs of the generator and discriminator heavily affects the performance. This is because a high correlation is needed for the height and width due to the upsampling in the generator. Since the correlations in spatial domain becomes relatively less than the other axes when the antenna spacing is $\lambda/2$ for moderate delay and Doppler spread, the frequency, time and space axes of the channels are aligned as height, width and depth, respectively. Furthermore, the real and imaginary parts of the channels are split and stacked in the spatial axis. 

The parameters are optimized with the training samples via RMSprop optimizer, a variant of gradient descent, with learning rate $0.00005$ for $3000$ epochs and a batch size of $200$. To understand the impacts of this setting, we compare the channel estimation error for different GAN configurations by generating $100$ test channels. Unlike the traditional neural networks that are not trained with adversary loss, it is not clear how to measure the training and validation error of a GAN, since the quality of training is perceptual. That's why,
we choose the channel estimation error as a performance metric. Furthermore, throughout the simulations the generator input dimension is set to a $15\times1$ vector for $\mathbf{z}$. As presented in Fig. \ref{fig:GAN_config}, the number of epochs and training data length have a major effect on the normalized mean square error (NMSE), whereas the batch size has relatively small impact.
\begin{figure} [!t]
\centering
\includegraphics[width=3.5in]{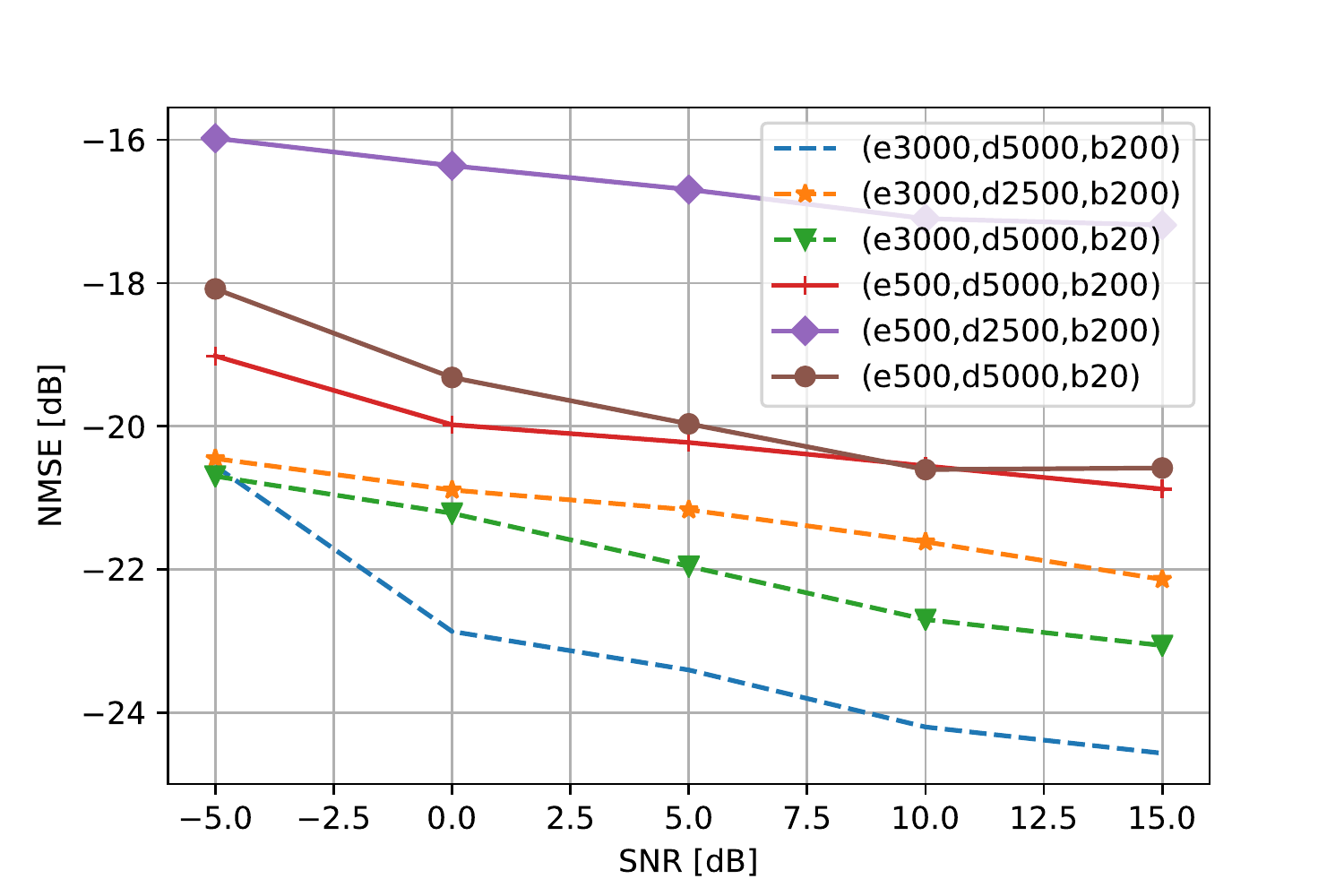}
\caption{The impact of the number of epochs, training data size and batch size while training the Wasserstein GAN on the channel estimation error, which are abbreviated as e, d, and b on the plot.}
\label{fig:GAN_config}
\end{figure}

%Ideally, the dataset should be formed by real measurements for training. However, since such a dataset does not exist, we use a channel model in LTE defined for up to 0.1 THz \cite{LTEChannelModel}. Specifically, $5000$ channel realizations are generated from the TDL-E channel model to train the parameters of the GAN via RMS prop optimizer, a variant of gradient descent, with learning rate $0.00005$ for $3000$ epochs and a batch size of $200$. In what follows, $100$ test channels are formed and the generative input is optimized to estimate the channel in accordance with \eqref{optimization_problem_GAN} by using the \textit{minimize} function in the SciPy library of Python. Our empirical results show that the generative input dimension has no major impact on the performance and a $15\times1$ vector for $\mathbf{z}$ is a sufficiently good choice.

\subsection{Performance Results}
We first assume that both offline and online channel realizations come from TDL-E model \cite{LTEChannelModel}. Furthermore, in this case orthogonal pilots are sent at each transmit antenna by placing a pilot tone in each coherence bandwidth and time. With this setting, the performance of the GAN-based channel estimation for the hybrid architecture defined above is compared with (i) the standard LS and the near-optimum LMMSE\footnote{Near-optimum LMMSE refers to the case when the noise is Gaussian, but the channel covariance matrix is estimated, i.e., not known.} channel estimators that operate on fully digital transceivers; and (ii) a supervised learning method, which maps $\mathbf{y}$ to the channel estimate $\mathbf{\hat{h}}_{\rm SL}$ through a standard ResNet model \cite{He2016Sun} after being trained with 5000 pairs of $\{\mathbf{y}, \mathbf{h}\}$.  The only differences between our approach and \cite{He2016Sun} are the linear activation function at the output layer and the input and output dimensions: our input is the stacked real and imaginary parts of $\mathbf{y}$, and the output is $\mathbf{\hat{h}}_{\rm SL}$. The performance metric is the NMSE
\begin{equation}
    {\rm NMSE} = \mathbb{E}\bigg[\frac{||\mathbf{h}-\mathbf{\hat{h}}||_2^2}{||\mathbf{h}||_2^2}\bigg],
\end{equation}
in which the expected value is over the underlying probability distribution of the channel $\mathbf{h}$, and $\mathbf{\hat{h}}$ refers to either $\mathbf{\hat{h}}_{\rm LMMSE}$, $\mathbf{\hat{h}}_{\rm LS}$, $\mathbf{\hat{h}}_{\rm GAN}$ defined in  \eqref{LMMSE}, \eqref{LS}, \eqref{GAN_chn_est} or $\mathbf{\hat{h}}_{\rm SL}$.   

Considering the worse performance of LS and high complexity of LMMSE stemming from matrix inversion and channel covariance matrix estimation, the GAN estimator seems intriguing for high frequency channel estimation as can be seen in Fig. \ref{fig:low_delay_spread_comp} and \ref{fig:high_delay_spread_comp}. Promisingly, our estimator can tackle the negative impacts of the hybrid model and achieve a close performance with respect to fully digital transceivers. Specifically, the proposed design results in very low error at low SNRs, which is the only feasible regime for mmWave and THz channel estimation. To illustrate, we can achieve the fully digital transceiver performance of the LS estimator at 20 dB SNR and LMMSE estimator at 2.5 dB SBR with our estimator at $-5$ dB SNR when the delay spread is 10 ns. Note that although LMMSE estimator is optimum for Gaussian noise when the channel covariance matrix is known, our performance is better than LMMSE for low SNRs, because of the estimation errors for channel covariance matrix. Furthermore, the residual error term in our estimator due to the representation and optimization error, which are explained in Section \ref{Measurement Matrix} and becomes more prominent at high SNRs, results in almost flat NMSE after some SNR. When the delay spread is increased to 100 ns, the performance of our estimator slightly decreases, since this yields lower correlations and hence less structures. Although our estimator at high SNRs is not as efficient as at low SNRs, this is not a problem, since the probability of observing moderate or high SNRs is extremely low due to the high propagation losses and the lack of beamforming gain. 
\begin{figure*}[!t]
\centering
\subfigure[Delay spread = 10ns]{
\label{fig:low_delay_spread_comp}
\includegraphics[width=3.25in]{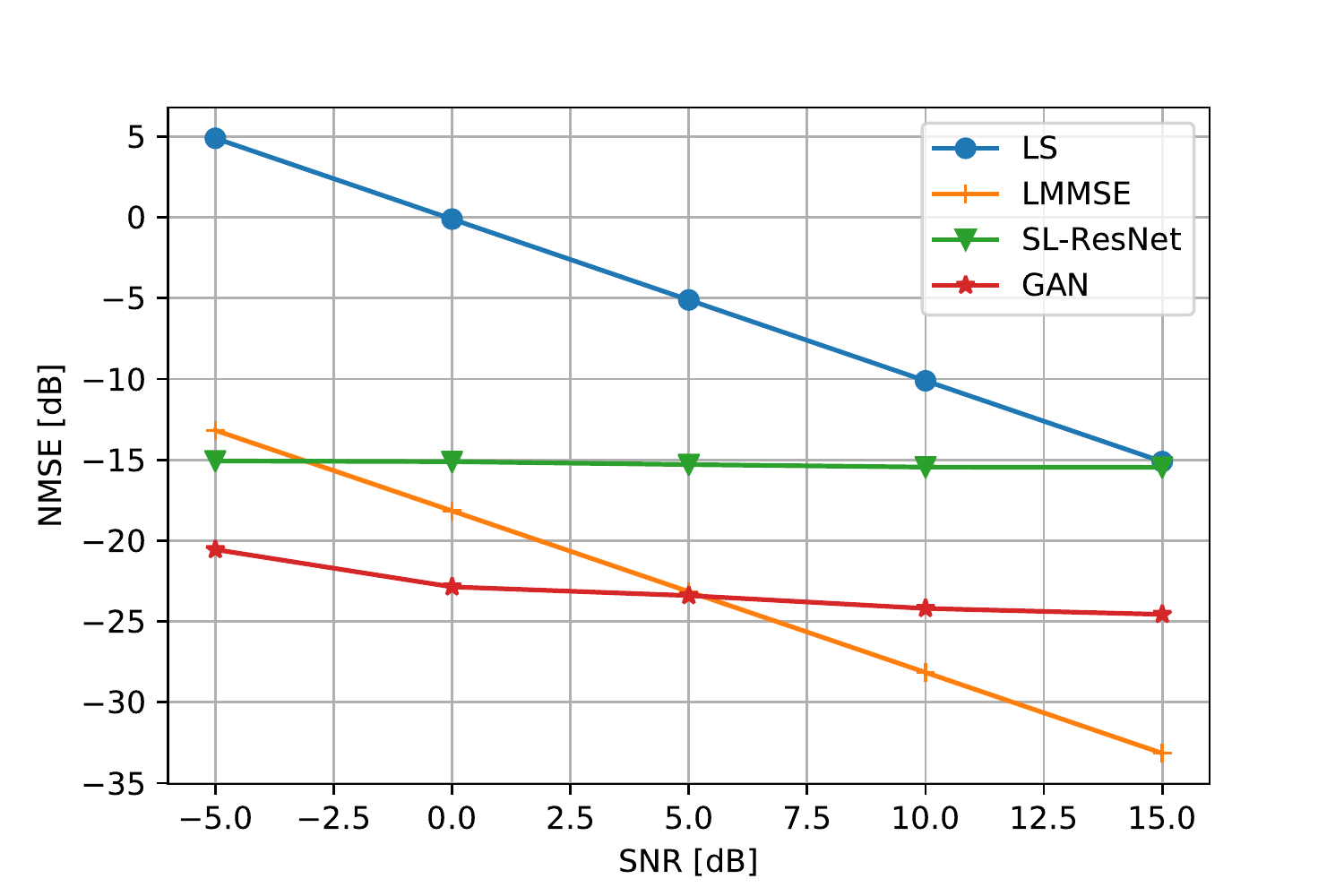}}
\qquad
\subfigure[Delay spread = 100ns]{
\label{fig:high_delay_spread_comp}
\includegraphics[width=3.25in]{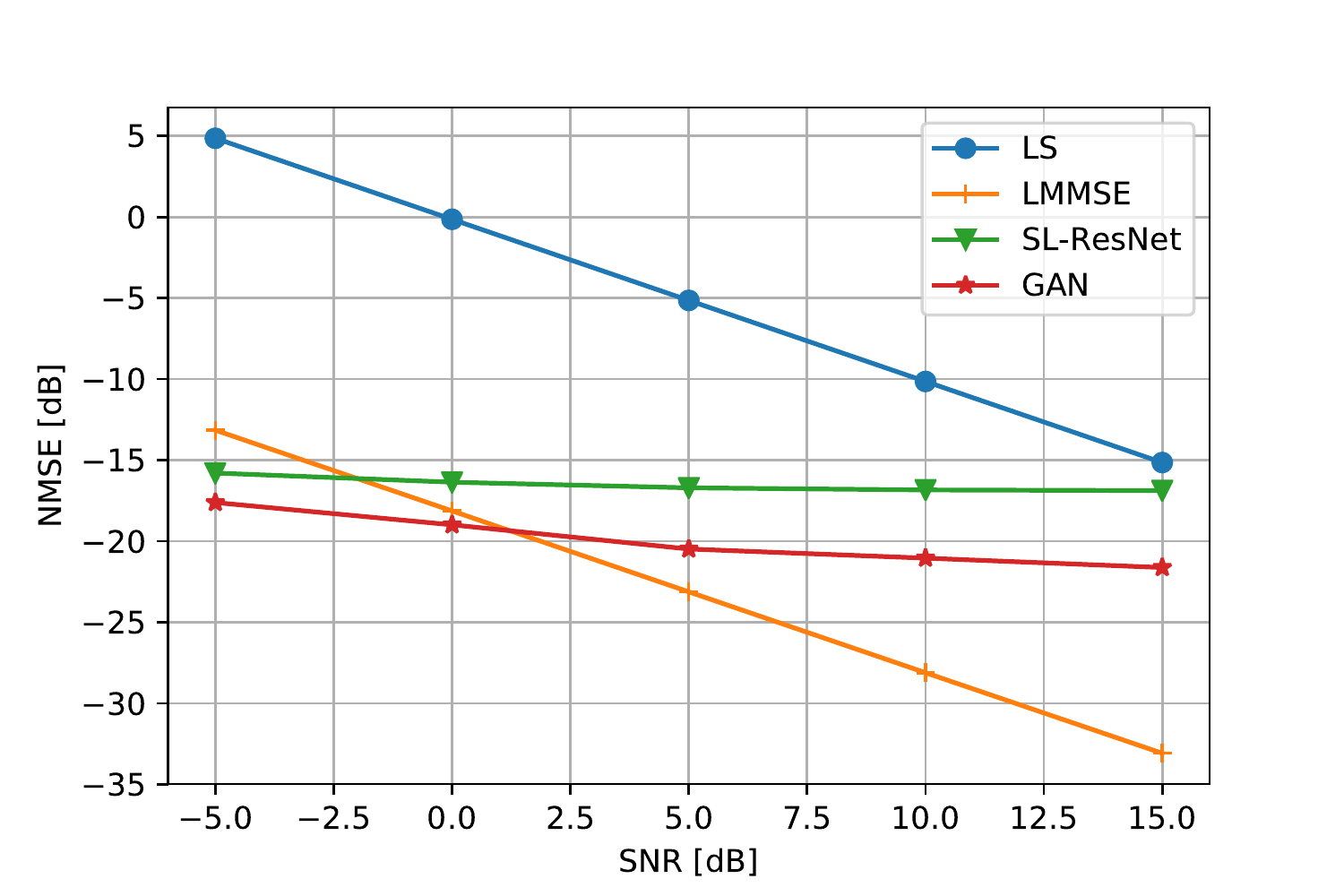}}
\caption{The comparison of the GAN-based channel estimation for hybrid architecture with the LS and LMMSE estimators for fully digital architecture, and a supervised learning method.}
\label{fig:full_pilot}
\end{figure*}

Another important point regarding Fig. \ref{fig:full_pilot} is that our GAN-based estimator outperforms the supervised learning model even if there is plenty of labeled data and a powerful CNN architecture. Specifically, to make a fair comparison, we trained the GAN and ResNet with 5000 clean channel samples via an RMSprop optimizer with a learning rate of $0.00005$ and a batch size of $200$, and use the same number of pilots. Since the ResNet architecture has many more parameters than the DCGAN architecture (33 million vs. 2 million), we trained the GAN with 3000 epochs and measure the elapsed time for channel estimation. Then, we find the number of epochs that corresponds to this amount of time for the ResNet-based channel estimator, and plot the results accordingly. We note that even if the ResNet is trained with 3000 epochs we empirically observed that its performance is still considerably worse than the GAN for low delay spread. On the other hand, for high delay spread 3000 epochs brings the ResNet-based estimator close to the GAN performance. This implies that CNNs trained with conventional loss functions are not as powerful as a GAN in exploiting the channel correlations as can be observed in Fig. \ref{fig:low_delay_spread_comp} and \ref{fig:high_delay_spread_comp}.

For mmWave and THz communication, sending orthogonal pilots for each coherence bandwidth and time at each transmit antenna leads to an excessive number of pilots due to the large number of antennas and bandwidth. In the next simulation, we address this problem by observing the estimation error of the GAN-based framework when the number of pilots is drastically reduced. Realizing that the lack of pilots creates an ill-posed problem, and thus the LS and LMMSE estimators are undefined, one has little choice but to use compressed sensing algorithms.  Thus, we use the generic OMP algorithm as a benchmark after sparsifying the channel with a DFT basis as opposed to the GAN algorithm that uses the original (unsparsified) channels. The OMP algorithm has a better performance-complexity tradeoff with respect to LASSO and does not require to know the channel distribution as opposed to message-passing algorithms \cite{Bal2020DosJalDimAnd}. However, as compared to the OMP algorithm, there is a clear performance gain of the proposed estimator as can be seen in Fig. \ref{fig:pilot_reduction}. More importantly, irrespective of the SNR the pilots can be reduced even up to $70\%$ with respect to the case of sending orthogonal pilots for each coherence bandwidth and time at each transmit antenna with only around 1 dB loss in NMSE. This brings us the flexibility of decreasing the number of pilots in frequency, time, or spatial domains or for a combination of these. Interestingly, this can also be interpreted as decreasing the number of transmit RF chains or sparsifying the connections between RF chains and antenna elements when the impact of $\mathbf{s}[n]$ is observed for $\mathbf{A}$ in \eqref{chn_est_prob}. To illustrate, decreasing the number of pilots into half can be interpreted as making $N_t^{RF}=N_s/2$.
\begin{figure} [!t]
\centering
\includegraphics[width=3.5in]{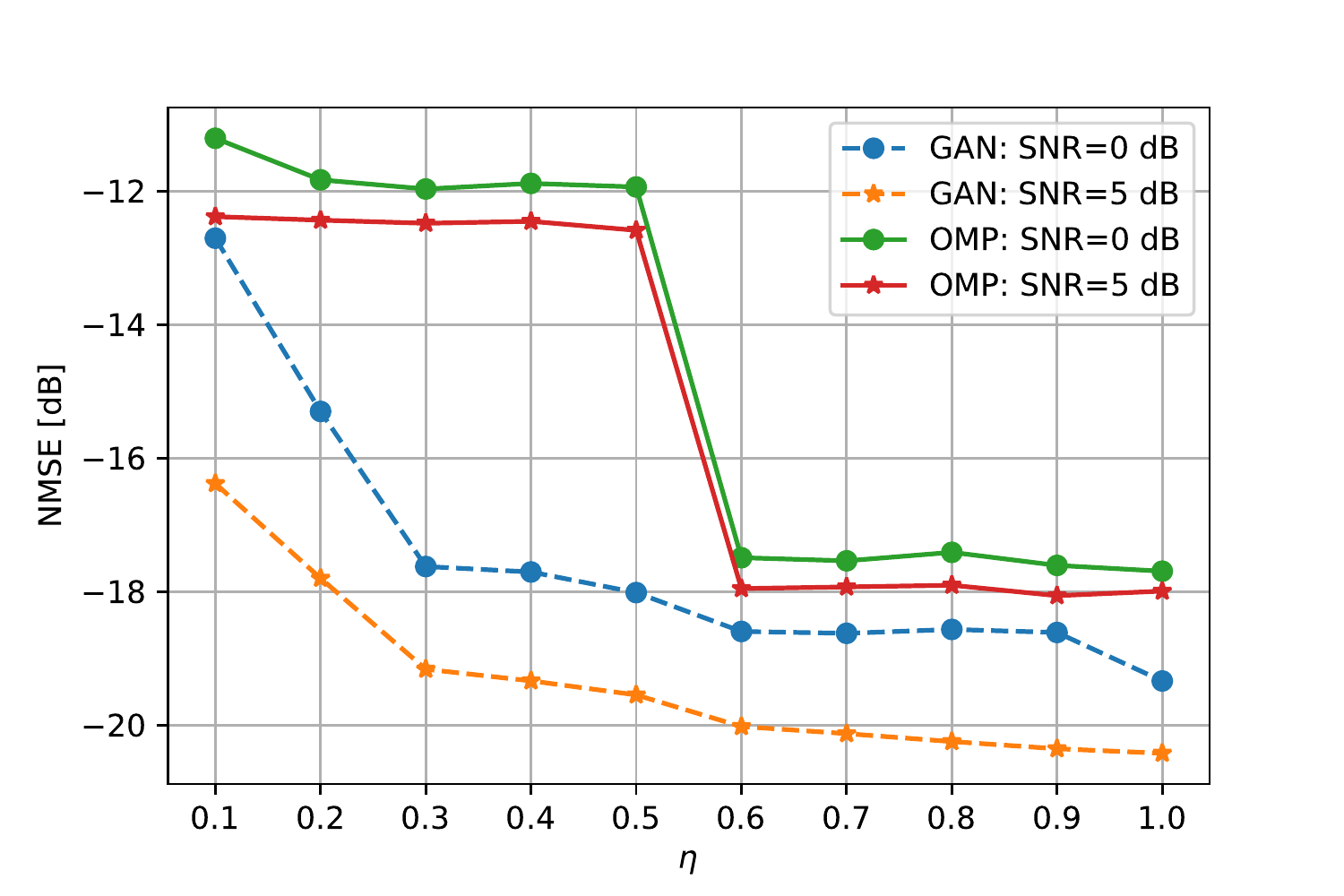}
\caption{The performance of the proposed estimator and OMP with respect to $\eta$, where $\eta$ is the ratio of the number of used pilots to the full pilot case, which corresponds to using an orthogonal pilot tone for each coherence bandwidth and time at each transmit antenna.}
\label{fig:pilot_reduction}
\end{figure}

\subsection{Generalization Capability}\label{Generalization Capability Num}
The statistics of the channel can vary with time. Thus, it is not always practical to assume that offline and online channels come from the same model. On the other hand, assuming that the offline channel comes from a model, e.g., TDL-E and the online channel from another, e.g., TDL-A makes things uncontrollable and hard to understand the impact of underlying channel parameters. As a starting point, we focus on the impact from the number of clusters and rays, and verify our analysis in Section \ref{Generalization Capability}. We consider one subcarrier, e.g., $\mathbf{H}_k$ in \eqref{chan_at_one_subcar}, and model it as a geometric channel as given in \eqref{geo_chn_mod}. The effect of the number of clusters is first observed by training the GAN with 20 clusters, each of which has 2 rays for Gaussian distributed angle of arrivals and departures with a variance of $5^\circ$. As shown in Fig. \ref{fig:clusters}, keeping all the parameters same except for the number of clusters for the online channel realizations yields that the channel estimation error does not become worse when there are 10, 15 or 25 clusters. In particular, the worst performance surfaces for 20 clusters and this is consistent with our analysis in Section \ref{Generalization Capability}.
\begin{figure} [!t]
\centering
\includegraphics[width=3.5in]{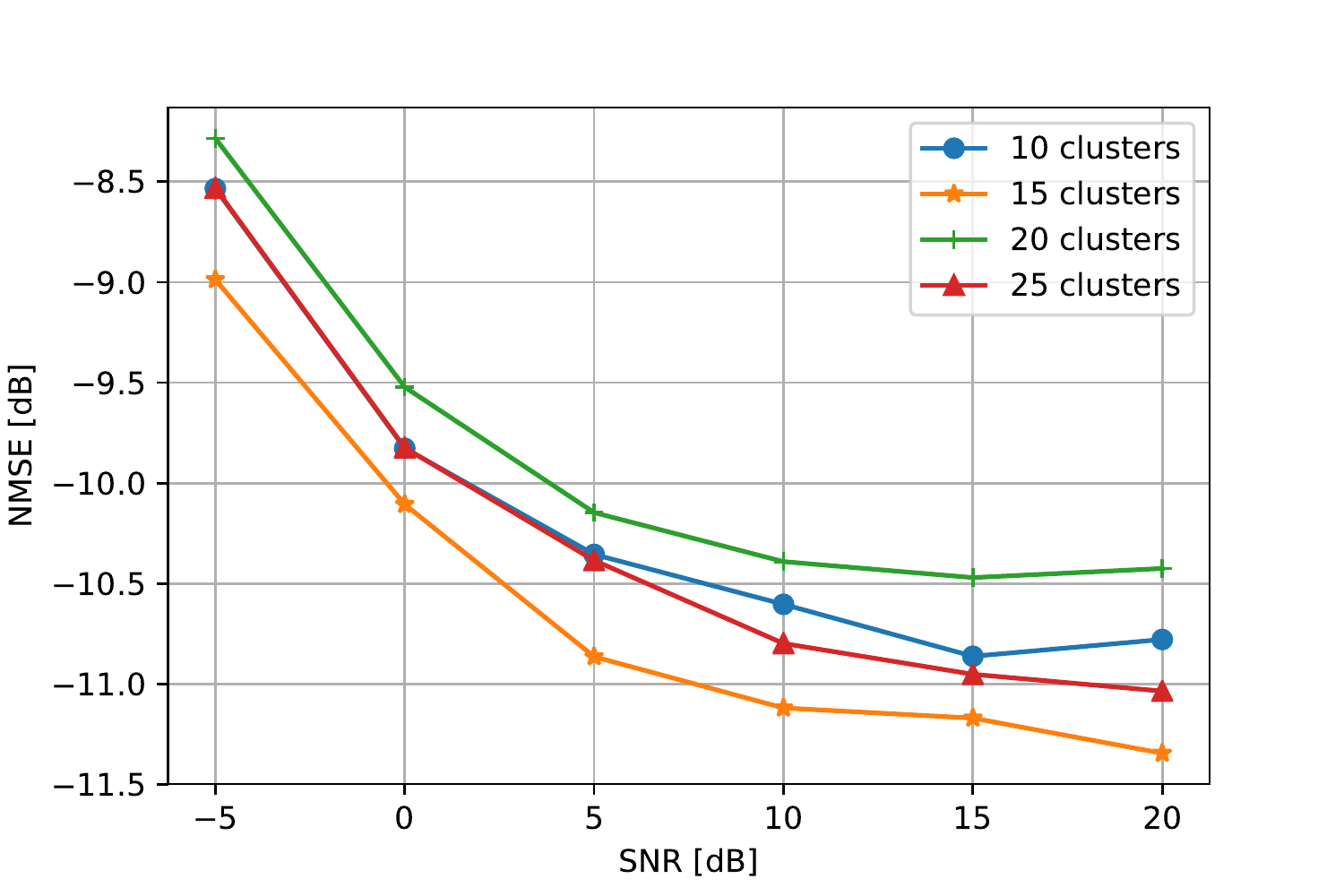}
\caption{The generalization capability of the GAN for different number of clusters when it was trained with 20 clusters}
\label{fig:clusters}
\end{figure}
Next, the same simulation is repeated for the number of rays in a cluster. In this case, we again use $20$ clusters such that each cluster has 1, 2, 4 or 8 rays. The variance of the angle spread is still $5^\circ$ for both arrival and departure beams. As can be seen in Fig. \ref{fig:rays}, although the GAN was trained with 2 rays, having 1, 4 or 8 rays per cluster does even enhance the channel estimation error similar to the previous case. Another important outcome is that although Fig. \ref{fig:clusters} and Fig. \ref{fig:rays} are obtained for one subcarrier with a very small antenna spacing of $\lambda/10$ for experimental purposes to have more structures, their NMSE are much higher in comparison to Fig. \ref{fig:full_pilot} and Fig. \ref{fig:pilot_reduction} that enjoy frequency correlations despite an antenna spacing of $\lambda/2$.
\begin{figure} [!t]
\centering
\includegraphics[width=3.5in]{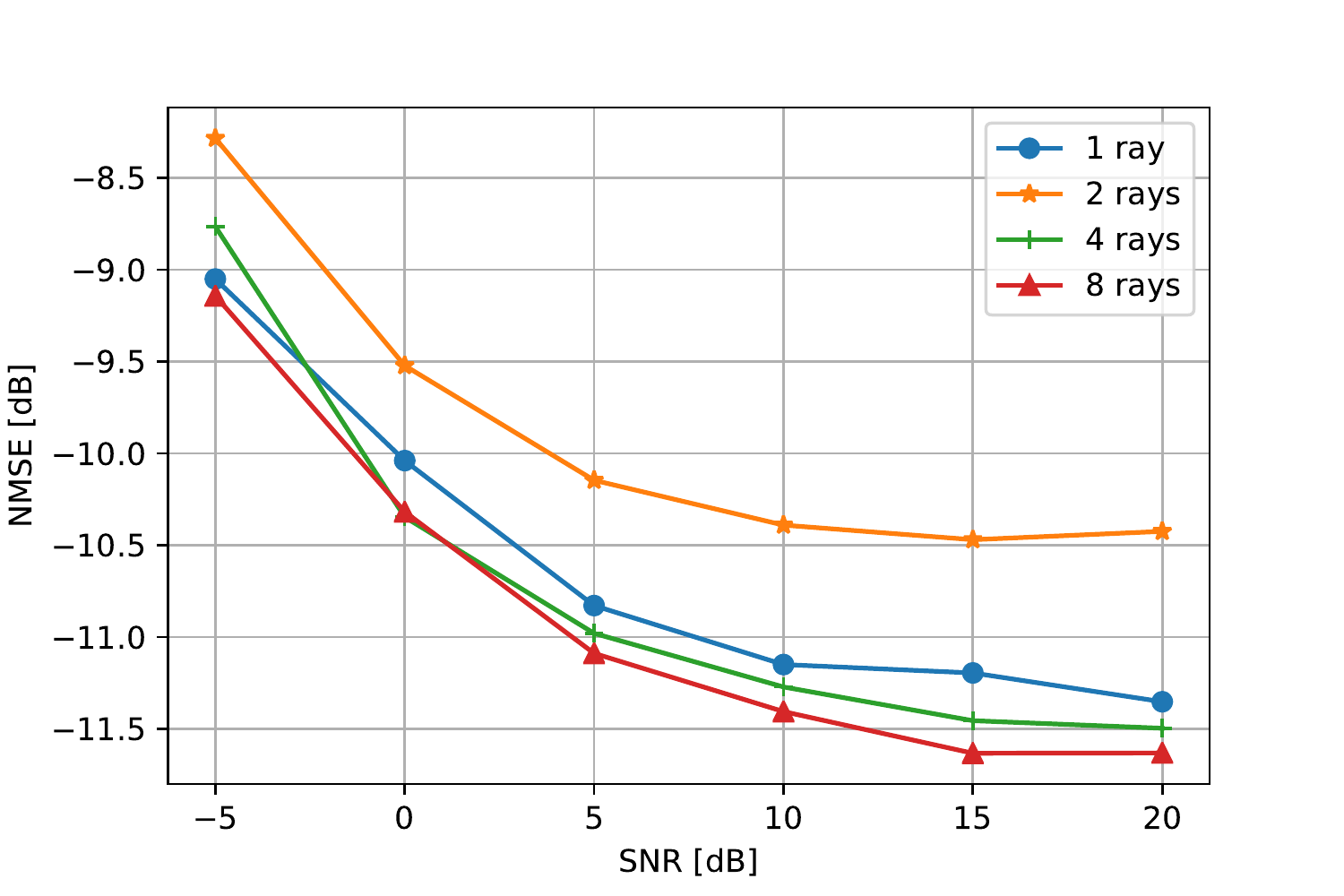}
\caption{The generalization capability of the GAN for different number of rays when it was trained with 2 rays}
\label{fig:rays}
\end{figure}

\subsection{Computational Complexity}
Relying on the generalization capability of neural networks and assuming that the channel statistics vary slowly, the training complexity of the generative parameters in the offline environment becomes negligible since it is amortized over a long time period. Thus, we focus our analysis on the inference complexity that comes from (i) optimizing the generator input $\mathbf{z}^*$ at each epoch using gradient descent as given in \eqref{GD} and (ii) estimating the channel through the matrix-vector products as in \eqref{GAN_chn_est}. For the ease of analysis, we first consider the complexity between a single transmit and receive antenna, and then generalize this for all antennas. For the first step, in the forward propagation the first layer of the generative network brings a complexity of $\mathcal{O}(N_fN_pd)$, where $d$ is the dimension of $\mathbf{z}$, and dominates the complexity of the other convolutional layers that have fewer parameters and smaller matrix multiplications. Then, to compute the gradient in \eqref{GD}, which is equal to 
\begin{eqnarray} \label{eq:grad_G}
    \nabla_{\mathbf{z}_{n}}||\mathbf{y} - \sqrt{\rho}\mathbf{A}G_{\hat{\theta}_g}(\mathbf{z}_{n})||_2^2 &=& -2\sqrt{\rho} \nabla_{\mathbf{z}_{n}} G_{\hat{\theta}_g}(\mathbf{z}_{n})^T\mathbf{A}^T  \nonumber \\
    &&(\mathbf{y} - \sqrt{\rho}\mathbf{A}G_{\hat{\theta}_g}(\mathbf{z}_{n})),   
\end{eqnarray}
the output of the generative network is multiplied with the measurement block diagonal matrix $\mathbf{A}$ that has $N_p$ blocks, and this yields a complexity of $\mathcal{O}(N_pN_f^2)$, which is the same with multiplying $\mathbf{A}^T$ with $(\mathbf{y} - \sqrt{\rho}\mathbf{A}G_{\hat{\theta}_g}(\mathbf{z}_{n}))$. The backward propagation complexity for $\nabla_{\mathbf{z}_{n}} G_{\hat{\theta}_g}(\mathbf{z}_{n})$ has nearly twice the execution time of the forward propagation in CNNs \cite{He2015Sun}, and the computational complexity of the second step only covers the forward propagation of the first step. Thus, the overall complexity becomes $\mathcal{O}(N_tN_rN_pN_f^2)$. This is much less than the LS, LMMSE and OMP estimators. Specifically, the LS and LMMSE estimators require the inversion of $\mathbf{A}^H\mathbf{A}^{-1}$ as can be seen in \eqref{LMMSE} and \eqref{LS}, and this yields a complexity of $\mathcal{O}(N_t^3N_r^3N_pN_f^3)$. The complexity of the OMP is related with the target signal dimension $\mathbf{h}$ and it is $\mathcal{O}(N_t^3N_r^3N_f^3)$ \cite{vila2013expectation}.

%Then, the only computational complexity comes from the matrix-vector products for the generator network and optimizing the $15\times 1$ generative input. For the latter, a simple gradient descent algorithm can be used. Thus, the complexity is dominated by the former and becomes $\mathcal{O}\left(\sum_{l=1}^5n_{l-1}n_lf_l^2m_{l,r}m_{l,c}\right)$ due to utilizing a CNN, where $n_l$ is the number of filters, $f_l$ is the filter dimension, $m_{l,r}$ and $m_{l,c}$ are the number of rows (height) and columns (width) for the $l^{th}$ layer. 

\section{Conclusions}
This paper demonstrates how to leverage GANs for effective frequency selective channel estimation in a mmWave or a THz channel: that is, a low SNR channel with a large number of spatial elements. Promisingly, the proposed GAN-based channel estimation works well for a hybrid architecture with one-bit quantized phase angles, and can even outperform estimators designed for fully digital receivers.  Additionally, the proposed estimator enables a substantial reduction in the number of pilot tones. Regarding its generalization capability, we demonstrate that changes in the number of clusters and rays in multipath channels can be inherently handled without retraining the generative network.  As future work, the impact of some other channel parameters such as power delay spread, Doppler spread and the angles of arrival and departure can be analyzed.  Furthermore, using a different generative model like a VAE instead of a GAN can be another extension.

\appendices
\section{Proof of Theorem 1} \label{Proof of Theorem 1}
The measurement matrix due to a single subcarrier becomes
\begin{equation} \label{meas_4_one_sub}
    \mathbf{A^{\rm (sub)}} = \mathbf{I}_{\rm N_p}\otimes(\mathbf{p}[n]^T\mathbf{F}_{\rm BB}[n]^T\mathbf{F}_{\rm RF}^T\otimes \mathbf{W}_{\rm RF}^H),
\end{equation}
in which the frequency index $k$ is dropped for simplicity. As can be observed, the statistics of $\mathbf{A}$ in \eqref{chn_est_prob} are the same as $\mathbf{A^{\rm (sub)}}$ in \eqref{meas_4_one_sub}. Thus, without any loss of generality we find the distribution of $\mathbf{A^{\rm (sub)}}$. Stacking the block diagonal matrices of \eqref{meas_4_one_sub} yields a more compact expression
\begin{equation} \label{stacked_meas_4_one_sub}
    \mathbf{\tilde{A}^{\rm (sub)}} = \mathbf{P}^T\mathbf{F}_{\rm BB}[n]^T\mathbf{F}_{\rm RF}^T\otimes \mathbf{W}_{\rm RF}^H
\end{equation}
where
\begin{equation} \nonumber
     \mathbf{P}^T = \begin{pmatrix}\mathbf{p}[n]^T \\ \vdots \\ \mathbf{p}[n+N_p-1]^T\end{pmatrix}.
\end{equation}
Since it is not practical to change $\mathbf{F}_{\rm RF}$ and $\mathbf{W}_{\rm RF}$ for each channel realization, we assume that they are given and fixed. Considering the distribution of  \eqref{stacked_meas_4_one_sub} due to an element of $\mathbf{W}_{\rm RF}^H$ results in
\begin{equation} 
    \mathbf{\tilde{A}} = w_{\rm RF}\mathbf{P}^T\mathbf{F}_{\rm BB}[n]^T\mathbf{F}_{\rm RF}^T.
\end{equation}
Notice that $ \mathcal{P}(\mathbf{\tilde{A}}) = \mathcal{P}(\mathbf{\tilde{A}^{\rm (sub)}})$, since $\mathbf{P}^T\mathbf{F}_{BB}[n]^T\mathbf{F}_{RF}^T$ is repeated for each element of $\mathbf{W}_{\rm RF}^H$.

The $(m,n)$th element of $\Tilde{\mathbf{A}}$ can then be written as 
\begin{equation}\label{a_m_n}
    \Tilde{a}_{m,n} = w_{\rm RF}\sum_{j=1}^{N_t^{RF}}\left(\sum_{i=1}^{N_s}p_{m,i}^Tf_{i,j}^T\right)\Tilde{f}_{j,n}^T
\end{equation}
where $p_{m,i}^T$ is the $(m,i)$th element of $\mathbf{P}^T$, $f_{i,j}^T$ is the $(i,j)$th element of $\mathbf{F}_{BB}[n]^T$ and $\Tilde{f}_{j,n}^T$ is the $(j,n)$th element of $\mathbf{F}_{RF}^T$. Due to the constant modulus constraint $|w_{\rm RF}|^2 = 1/N_r$ and
\begin{equation}\label{const_sum}
    \sum_{j=1}^{N_t^{RF}}\sum_{i=1}^{N_s}|f_{i,j}^T \Tilde{f}_{j,n}^T|^2 = \frac{||\mathbf{F}_{\rm BB}[n]||_F^2}{N_t}.
\end{equation}
Since there is a total transmission power constraint and Frobenius norm is submultiplicative, $||\mathbf{F}_{\rm RF}\mathbf{F}_{\rm BB}[n]||_F^2 = N_s \leq ||\mathbf{F}_{\rm RF}||_F^2||\mathbf{F}_{BB}[n]||_F^2$. Hence,
\begin{equation}
    \sum_{j=1}^{N_t^{RF}}\sum_{i=1}^{N_s}|f_{i,j}^T \Tilde{f}_{j,n}^T|^2 \geq \frac{N_s}{N_tN_t^{RF}}.
\end{equation}
According to Hoeffding's inequality, for zero mean bounded i.i.d. random variables $X_1, \cdots X_N$ and $b=(b_1, \cdots, b_N) \in \mathcal{R}^N$, we have
\begin{equation} \label{orig_hoeffding_inequality}
    \mathcal{P}\left(\sum_{i=1}^{N}X_ib_i \geq t\right) \leq \exp\left(-{\frac{t^2}{2||b||_2^2}}\right)
\end{equation}
for $t\geq0$. Replacing $X_i$ with $p_{m,i}^T$ and $b_i$ with $w_{\rm RF}\sum_{j=1}^{N_t^{RF}}f_{i,j}^T\Tilde{f}_{j,n}^T$ in \eqref{a_m_n} yields
\begin{equation} \label{hoeffding_inequality}
    \mathcal{P}\left( \Tilde{a}_{m,n} \geq t\right) \leq \exp\left(-{\frac{t^2N_tN_t^{RF}N_r}{2N_s}}\right).
\end{equation}
Thus, $\Tilde{a}_{m,n}$ has a sub-Gaussian distribution. Since \eqref{hoeffding_inequality} holds for all $m={1,2,\cdots,N_p}$ and $n={1,2,\cdots,N_t}$, this completes the proof.

\bibliographystyle{IEEEtran}
\bibliography{Balevi}

\end{document}